\documentclass[11pt,a4paper]{amsart}
\usepackage[a4paper,centering]{geometry}
\usepackage{cite}
\usepackage{graphicx}
\usepackage{subfig}
\captionsetup[subfigure]{labelfont=rm}

\makeatletter
\newcommand{\gd}{\delta}

\renewcommand{\gg}{\gamma}
\newcommand{\gk}{\kappa}
\newcommand{\go}{\omega}
\newcommand{\gs}{\sigma}
\newcommand{\vp}{\varphi}
\newcommand{\gD}{\Delta}
\newcommand{\gG}{\Gamma}
\newcommand{\gL}{\Lambda}
\newcommand{\cB}{\mathcal{B}}
\newcommand{\cV}{\mathcal{V}}
\newcommand{\R}{\mathbb{R}}
\newcommand{\p}{\partial}

\newtheorem{theorem}{Theorem}[section]
\newtheorem{lemma}[theorem]{Lemma}
\newtheorem{corollary}[theorem]{Corollary}

\theoremstyle{remark}
\newtheorem{remark}[theorem]{Remark}


\newcommand{\diag}[1]{\text{diag}(#1)}
\newcommand{\cs}{\ensuremath {\cosh(s)}}
\newcommand{\sis}{\ensuremath {\sinh(s)}}

\newcommand{\sit}{\ensuremath {\sin(\theta)}}
\let\lg\langle
\let\rg\rangle
\let\mb\mathbf


\date{March 21, 2017}
\title{Construction of a Relativistic Ornstein--Uhlenbeck Process}

\author[J.~Potthoff]{J\"urgen Potthoff}
\address{J\"urgen Potthoff\newline
Institut f\"ur Mathematik\newline 
Universit\"at Mannheim\newline
D--68131 Mann\-heim, Germany}
\email{potthoff@uni-mannheim.de}


\author[R.~Schrader]{Robert Schrader}

\copyrightinfo{2017}{J.~Potthoff, R.~Schrader}
\subjclass[2010]{60H10, 60H20, 83A05}
\keywords{Ornstein--Uhlenbeck process, special relativity, mass shell, 
Riemannian manifolds, stochastic differential equations}

\begin{document}
\begin{abstract}
Based on a version of Dudley's Wiener process~\cite{Du65} on the mass 
shell in the momentum Minkowski space of a massive point particle, a model
of a relativistic Ornstein--Uhlenbeck process is constructed by addition
of a specific drift term. The invariant distribution of this momentum process 
as well as other associated processes are computed. 
\end{abstract}

\maketitle
\thispagestyle{empty}
\section{Introduction} \label{sect_Intro}
In 1930 Ornstein and Uhlenbeck~\cite{UhOr30} introduced the stochastic process which 
afterwards carried their name in order to treat the case where the particle undergoing 
a motion of Brownian type has a surrounding medium which is a rarefied gas instead 
of a liquid. They argued that in this case one has to take into account the friction 
that the particle experiences by hitting the gas molecules, which they called 
\emph{Doppler friction}. As a consequence, they proposed an equation of Langevin type 
for the velocity of the particle instead of for its position. They proved that the 
velocity of the so defined motion admits a stationary state which is described by a 
centered normal density. For a discussion of the Ornstein--Uhlenbeck process from the 
physics point of view, especially in comparison to the Einstein--Smoluchowski theory
of Brownian motion, we refer the interested reader to~\cite{Ne67}. For example 
in~\cite{IkWa89} one can find a treatment within the context of It\^o's theory of 
stochastic differential equations.

The ground breaking paper~\cite{Du65} by Dudley in 1965 was the first in which
a relativistic Wiener process has been constructed. Since then a large amount of 
literature on diffusion processes in the frameworks of special and general relativity
has been published. We refer the interested reader especially to the overview 
papers~\cite{DuHa09, Fr14}, to the literature quoted there, and also to 
\cite{Ba10, DuHa05, DuHa05a, Fr09, FrLe06, FrLe11, FrLe12, Ha09, Ha10, He09, He10}.

A construction of a relativistic Ornstein--Uhlenbeck process is provided by~\cite{DeMa97},
based on a relativistic formulation of the Langevin equation. In~\cite{Ha09, Ha10}
Haba has studied a variety of relativistic diffusion processes, and among them also 
processes of Ornstein--Uhlenbeck type. The relation of these papers to the model we 
construct in the present manuscript has still to be worked out. 

We consider the momentum of the particle as the basic dynamical quantity.
Therefore we consider the relativistic four momentum 
\begin{equation*}
	\mb{p}=(p_0,p_1,p_2,p_3)\in\R^4
\end{equation*} 
of a massive point particle, and the special theory of relativity demands that $\mb{p}$ has to
be a point of the mass shell, that is the condition
\begin{equation*}
	p_0^2 - \sum_{i=1}^3 p_i^2 = m^2 c^2,\qquad p_0\ge 0,
\end{equation*}
has always to be fulfilled, where $m$ is the mass of the particle, and $c$ is the speed 
of light in vacuum. It turns out that the mass shell is a $3$ dimensional Riemannian 
manifold, and which therefore is equipped with a canonical, positive definite 
Laplace--Beltrami operator. Therefore we first construct a stochastic process of Wiener
type on the mass shell, which has this Laplace--Beltrami operator as its generator,
and the resulting process is a version of Dudley's Wiener process~\cite{Du65}.
Next, via an It\^o stochastic differential equation, we add a drift of a specific form
(cf.\ section~\ref{sect_OU}) to this Wiener
process in order to imitate the drift which has been introduced by Ornstein and Uhlenbeck
to model the Doppler friction. As consequence, we obtain a stochastic process which
we call \emph{relativistic Ornstein--Uhlenbeck momentum process}, which moves on the
mass shell and admits a stationary state. The \emph{relativistic Ornstein--Uhlenbeck
velocity process} is then defined as prescribed by special relativity, namely as
the space components of the momentum process divided by the energy process (times $c^2$).

The plan of the article is as follows. In section~\ref{sect_LB} we discuss the mass shell 
as a Riemannian manifold and calculate the associated Laplace--Beltrami operator. The 
Wiener process on the mass shell is constructed in section~\ref{sect_BM}, while the 
relativistic Ornstein--Uhlenbeck process is treated in section~\ref{sect_OU}. In 
section~\ref{sect_IM} various invariant measures (or stationary states) are computed 
for the momentum and the velocity process. In appendix~\ref{app_Sim} we describe our 
simulation method for the stochastic differential equations.

\par\vspace{1\baselineskip}\noindent
\textbf{Acknowledgement.}\ JP gratefully acknowledges helpful discussions with 
\emph{Leif D\"o\-ring} and \emph{Andreas Neuenkirch}. The authors are very much 
indebted to \emph{Zbiegniew Haba} for pointing out the work by R.M.~Dudley, 
J.~Franchi, Y.~Le~Jan, and the references \cite{Ha09, Ha10}.

\section{The Mass Shell and its Laplacian}	\label{sect_LB}
Throughout this article we consider a space dimension $d$, which is greater or equal 
to two. For points $\mathbf{p}$ in $\R^{1+d}$ we write their cartesian coordinates as 
$\mb{p} = (p_0,p)$ with $p_0\in\R$, $p=(p_1,\dotsc,p_d)\in\R^d\!$. $\R^{1+d}$ is equipped 
with the Minkowski metric tensor $g_M = \diag{1,-1,\dotsc,-1}$, and inner product
\begin{equation*}
	\lg \mb{p},\mb{q}\rg = (\mb{p}, g_M\cdot \mb{q}) = p_0 q_0 - \sum_{i=1}^d p_i q_i,
\end{equation*}
where ``$\,\cdot\,$'' denotes matrix multiplication, and $(\,\cdot\,,\,\cdot\,)$ 
stands for the euclidean product in~$\R^{1+d}\!$. For $m>0$ define the \emph{mass 
shell}
\begin{equation*}
	\cV_m^d = \bigl\{\mb{p}\in\R^{1+d},\,p_0>0,\,
					\lg \mb{p},\mb{p}\rg = m^2 c^2\bigr\},
\end{equation*}
where $c$ is the speed of light in vacuum. From now on we shall use physical
units such that $c=1$.

Clearly, if $\mb{p}\in\cV_m^d$ then $p_0\ge m$ holds true, and for given $p_0>m$,
$p$ belongs to the $d-1$ dimensional sphere $S^{d-1}_\rho$ of radius
$\rho = (p_0^2-m^2)^{1/2}$. 

It is convenient to coordinatize $\cV_m^d$ by hyperbolic coordinates 
$(s,\go)\in \R_+\times S^{d-1}$, where $S^{d-1}$ denotes the $(d-1)$--dimensional
unit sphere. Namely --- with the exception of the apex $\mb{p} = (m,0,\dotsc,0)$ --- 
every point $\mb{p}=(p_0,p)$ in $\cV_m^d$ can uniquely be written as 
\begin{equation}	\label{hyp_coord}
	p_0 = m\cosh(s),\quad p = m\sinh(s)\,\go,\qquad s>0,\,\go\in S^{d-1}.
\end{equation}
At the apex, i.e., for $s=0$, we simply leave $\go$ undefined. 

To make this more concrete, we let $S^{d-1}$ be parametrized in the usual way by 
angles $\theta_1$, \dots, $\theta_{d-1}$ with $\theta_k\in[0,\pi)$, $k=1$, \dots, 
$d-2$, and $\theta_{d-1}\in [0,2\pi]$. Set $\theta = (\theta_1,\dotsc, \theta_{d-1})$, 
and consider the mapping
\begin{equation*}
	\iota: (s,\theta) \mapsto \mb{p}(s,\theta) 
					= \bigl(m\cs, m\sis\,\go(\theta)\bigr).
\end{equation*}
We may consider this mapping as an immersion of $\cV_m^d$ into $\R^{1+d}$. In 
block form its Jacobian reads
\begin{equation*}
	J = m \begin{pmatrix}
			\sis 			    & 	0		\\
			\cs\,\go(\theta)	& \displaystyle \sis\, 
									\Bigl(\frac{\p \go(\theta)}{\p\theta}\Bigr)
		\end{pmatrix},
\end{equation*}
and $\bigl(\p \go(\theta)/\p \theta\bigr)$ is the Jacobian of the embedding of the
unit sphere $S^{d-1}$, coordinatized by the angles $\theta$, into~$\R^d$. With the
immersion $\iota$ we pull the Minkowski metric $g_M$ back on $\cV_m^d$ yielding
a metric $g_d$, which in matrix form is given by
\begin{equation*}
	g_d = - J^t\cdot g_M\cdot J.
\end{equation*}
The minus sign is chosen for later convenience, and the superscript ``$t$'' stands 
for transposition. Hence
\begin{align*}
	g_d(&s,\theta) = \\
		&= m^2 \begin{pmatrix}
				\cs^2\, \go(\theta)^t\cdot\go(\theta) - \sis^2		
					& \sis\cs\, \go(\theta)^t\cdot 
							\bigl(\frac{\p \go(\theta)}{\p \theta}\bigr)\\[2ex]
						    \sis\cs\, \go(\theta)\cdot 
								\bigl(\frac{\p \go(\theta)}{\p \theta}\bigr)^t
					& \sis^2 \bigl(\frac{\p \go(\theta)}{\p \theta}\bigr)^t
							\cdot\bigl(\frac{\p \go(\theta)}{\p \theta}\bigr)	
		   \end{pmatrix}\\
		&= m^2 \begin{pmatrix}
				1	&	0	\\
				0	&   \sis^2 g_{S^{d-1}}(\theta)
		  \end{pmatrix},
\end{align*}
because $\go(\theta)^t\cdot\go(\theta)=1$, which is also the reason that the
off-diagonal terms vanish. Moreover, $g_{S^{d-1}}(\theta)$ is the usual Riemannian 
metric tensor  of the unit sphere $S^{d-1}$ in $d$ dimensions, written as a matrix 
parametrized by the angle variables $\theta=(\theta_1,\dotsc, \theta_{d-1})$. 

Observe that $g_d$ is positive definite, providing a Riemannian metric 
on~$\cV_m^d$. The associated volume element is
\begin{equation}	\label{RVol}
\begin{split}
	d\text{vol}_d(s,\theta) 
		&= m\,\sinh(s)^{d-1}\,ds\,d\gs_{S^{d-1}}(\theta)\\
		&= m\,\sinh(s)^{d-1} \bigl(\det g_{S^{d-1}}(\theta)\bigr)^{1/2}\,ds\,d\theta,					
\end{split} 
\end{equation}
where $d\gs_{S^{d-1}}$ is the Riemannian surface element of the sphere $S^{d-1}$.
The usual well-known formula (e.g., \cite{He78}, \cite{Jo11}) for the 
Laplace--Beltrami operator $\gD_d$ on $\cV_m^d$ relative to $g_d$ yields
\begin{equation}	\label{LBd}
\begin{split}
	\gD_d &= \frac{1}{m^2 \sis^{d-1}}\,\frac{\p}{\p s}\, 
				\sis^{d-1}\,\frac{\p}{\p s}
					+ \frac{1}{m^2\sis^2}\,\gL_{S^{d-1}}\\
		  &= \frac{1}{m^2}\,\frac{\p^2}{\p s^2} + \frac{d-1}{m^2}\,
				\coth(s)\,\frac{\p}{\p s} + \frac{1}{m^2\sis^2}\,\gL_{S^{d-1}},
\end{split} 
\end{equation}
where $\gL_{S^{d-1}}$ is the standard Laplace--Beltrami operator on the 
sphere~$S^{d-1}$. For $d=2$ we find the explicit form
\begin{equation} \label{LB2}
	\gD_2 = \frac{1}{m^2}\,\frac{\p^2}{\p s^2} + \frac{1}{m^2}\,\coth(s)\,\frac{\p}{\p s} 
				+ \frac{1}{m^2\sis^2}\,\gL_{S^{d-1}}\,\frac{\p^2}{\p \vp^2},
\end{equation}
while for the case $d=3$ of the physical Minkowksi space it reads
\begin{equation}	\label{LB3}
\begin{split}
	\gD_3 = \frac{1}{m^2}\,\frac{\p^2}{\p s^2} &+ \frac{2}{m^2}\,\coth(s)\,\frac{\p}{\p s} \\
		  		&+\frac{1}{m^2\sis^2}\,\Bigl( 
		          \frac{1}{\sit}\, \frac{\p}{\p \theta}\,\sit\,\frac{\p}{\p \theta}
				       + \frac{1}{\sit^2}\, \frac{\p^2}{\p\vp^2}\Bigr).
\end{split}
\end{equation}

\section{Wiener Process on the Mass Shell}	\label{sect_BM}
A natural way to define a Wiener process on a Riemannian manifold is as a stochastic 
process whose generator is one half times the canonical Laplace--Beltrami operator 
on the manifold. The history of Wiener and --- more generally --- diffusion processes 
on manifolds is quite long, and probably the first papers where those by Yosida~\cite{Yo49, Yo52}, 
and by It\^o~\cite{It50}. A turning point in this development was the 
construction by Eells, Elworthy~\cite{EeEl76}, and Malliavin \cite{Ma78} (see also~\cite{El82}), 
based on the rolling map of Cartan. The resulting construction of a diffusion on the 
orthonormal frame bundle is since then one of the basic methods, and it can be found 
in many textbooks such as~\cite{El82, IkWa89, HaTh94, Hs02}.

Dudley~\cite{Du65} has been the first who constructed a Wiener process on $\cV_m^d$. 
His construction uses the theory of convolution semigroups on homogeneous spaces. 
Here we employ a different method which is as follows.
The specific form~\eqref{LBd} of the Laplace--Beltrami operator $\gD_d$, namely 
the fact the differential operator in the $s$--variable involves no dependency on 
the angle variables, and the $s$--dependence of the last term only appears in form 
of a factor, suggests another possible construction via stochastic differential 
equations and a skew product, e.g., \cite[Sect.\ 7.15]{ItMc74}, which we carry
out now. 

In fact, suppose that $\Theta=(\Theta_t,\,t\in\R_+)$ 
is a Wiener process on the sphere $S^{d-1}$ (i.e., with generator 
$1/2\,\gL_{S^{d-1}}$), and $S=(S_t,\,t\in\R_+)$ is a stochastic process on 
$(0,+\infty)$ with continuous paths, and generator $A_d$ defined by
\begin{equation}	\label{GenS}
	A_df(s) =  \frac{1}{2m^2}\,f''(s) 
					+\frac{d-1}{2m^2}\,\coth(s)f'(s),\qquad s\in (0,+\infty),
\end{equation}
for $f\in C^2((0,+\infty))$. The It\^o stochastic differential equation~(SDE) 
associated with the generator $A_d$ is
\begin{equation}	\label{SDE_S}
	dS_t = \frac{d-1}{2m^2}\,\coth(S_t)\,dt + \frac{1}{m}\,dW_t,
\end{equation}
where $W$ is a standard Wiener process on the real line. Below we shall prove 
the existence and uniqueness of solutions of this equation. Consider the 
stochastic time scale
\begin{equation}	\label{tau}
	\tau(t) = \int_0^t \bigl(m\sinh(S_r)\bigr)^{-2}\,dr
\end{equation}
then the skew product
\begin{equation}	\label{skewpr}
	B_t = \bigl(S_t,\Theta_{\tau(t)}\bigr),\qquad t\in\R_+
\end{equation}
of $S$ and $\Theta$ defines a path-continuous stochastic process on $\cV_m^d$
whose generator is $1/2\,\gD_d$, i.e., a Wiener process on $\cV_m^d$. The proof 
is the same as for the spherical Wiener process in Section~7.15 of~\cite{ItMc74}. 
In fact there one can also find a method for the construction of the Wiener process 
$\Theta$ on $S^{d-1}$ by successive applications of skew products. Another 
possibility for the construction of the Wiener process $\Theta$ on $S^{d-1}$ is 
Stroock's method (\cite{St71}, cf.\ also \cite{Hs02}): In this case one takes 
a standard Wiener process in the euclidean space $\R^d$, starts it on the 
embedded sphere $S^{d-1}$, and projects the infinitesimal increments of the 
euclidean Wiener process onto the sphere by the usual orthogonal projections. 
The resulting stochastic differential equation can be solved, and yields another
version of the process $\Theta$.

Therefore, in order to complete our first task, viz.\ to construct a Wiener process
on the mass shell $\cV_m^d$, it remains to construct the process $S$ on $(0,+\infty)$
with generator $A_d$ as in~\eqref{GenS}. In other words, we want to prove that the 
It\^o SDE~\eqref{SDE_S} has unique solutions (in which precise sense will be clarified 
further below). Due to the singularity of the drift term of the SDE~\eqref{SDE_S} 
\begin{equation}	\label{eq_drift_S}
	b_0(s) = \frac{d-1}{2m^2}\, \coth(s),\qquad s>0,
\end{equation}
at $s=0$, one cannot employ the standard theorems on the existence and uniqueness
of SDE's, as they can be found in, e.g., \cite{IkWa89}, \cite{KaSh91}, \cite{ReYo91}.
However, by using results in the book~\cite{ChEn05} by Cherny and Engelbert
we can prove following 

\begin{theorem} \label{thm_ex_un_b0}
For every starting point $s_0\in (0,+\infty)$ the SDE~\eqref{SDE_S} has a pathwise
unique strong solution $S=(S_t,\,t\in\R_+)$ with paths which are $P$--a.s.\ strictly 
positive. Moreover, the solutions are transient in the following sense: For every 
$a>0$ and every initial condition $s_0>a$ the event $\{T_a=+\infty\}$, where $T_a$ 
is the first hitting time of $a$ by $S$, has strictly positive probability, and on 
this set $\lim_{t\to+\infty} S_t=+\infty$ $P$--a.s. 
\end{theorem}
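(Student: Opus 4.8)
The plan is to treat the SDE~\eqref{SDE_S} as a one–dimensional diffusion on the open interval $(0,+\infty)$ and to exploit the fact that its coefficients are smooth --- hence locally Lipschitz --- everywhere except at the left endpoint $s=0$, where the drift $b_0$ has its singularity. Away from $0$ the classical theory already yields a pathwise unique strong solution up to the first exit time from any compact subinterval $[1/n,n]$; the entire difficulty is therefore to show that this localized solution reaches neither boundary of $(0,+\infty)$ in finite time, so that the exit times increase to $+\infty$ and a global strong solution is obtained. This boundary analysis is exactly what the scale/speed machinery of Cherny--Engelbert~\cite{ChEn05} supplies.

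First I would compute the natural scale. Since $2b_0(s)/\sigma^2 = (d-1)\coth(s)$ with $\sigma = 1/m$, the scale density is $p'(s)=\sinh(s)^{-(d-1)}$ and the speed density is proportional to $\sinh(s)^{d-1}$. Near $s=0$ one has $\sinh(s)^{-(d-1)}\sim s^{-(d-1)}$, which for every $d\ge 2$ is non--integrable, so $p(0^+)=-\infty$: the endpoint $0$ lies at infinite distance in natural scale and is inaccessible. This is the step that forces the paths to stay strictly positive $P$--a.s., and it is what justifies ignoring the singularity of $b_0$. Near $s=+\infty$ the scale density decays like $e^{-(d-1)s}$, hence is integrable and $p(+\infty)=:L<\infty$; to rule out an explosion to $+\infty$ in finite time I would invoke Feller's test (\cite{KaSh91}), checking that the relevant integral $v(+\infty)=\int^{\infty}p'(y)\bigl(\int^y \sinh(z)^{d-1}\,dz\bigr)\,dy$ diverges, the two exponential factors cancelling to leave a linearly divergent integrand. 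With both boundaries inaccessible the localized solution is global, and local Lipschitz continuity on each $[1/n,n]$ upgrades this to the pathwise unique strong solution asserted.

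Transience is then read off from the same two computations. Passing to natural scale, $Y_t:=p(S_t)$ is a continuous local martingale taking values in $(-\infty,L)$ with $L<\infty$, and since the left end $-\infty$ is unattainable, from any starting level $Y_0=p(s_0)$ the process must either hit the level $p(a)$ or converge to $L$. A gambler's--ruin estimate via optional stopping on $Y$ then gives, for $a<s_0$,
\begin{equation*}
    P_{s_0}(T_a=+\infty)=\frac{p(s_0)-p(a)}{L-p(a)}>0,
\end{equation*}
and on the event $\{T_a=+\infty\}$ one has $Y_t\to L$, equivalently $S_t\to+\infty$, because $p$ is an increasing homeomorphism of $(0,+\infty)$ onto $(-\infty,L)$.

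I expect the main obstacle to be the rigorous handling of the singular endpoint $s=0$: one must be certain that the local solutions patch into a single global one without ever feeling the singularity of $b_0$, which is precisely the point at which the Cherny--Engelbert boundary classification, rather than the classical Lipschitz theory, becomes indispensable. A secondary technical point is justifying the limit $\beta\to L$ in the gambler's--ruin computation and the convergence $S_t\to+\infty$ on the escape event, both of which rely on knowing that $+\infty$ is unattainable, so that the only alternative to hitting $a$ is the monotone-in-scale escape toward $L$.
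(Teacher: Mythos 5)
Your argument is correct, but it reaches the theorem by a genuinely different route than the paper. The paper never localizes: because the drift $b_0$ fails the usual local integrability condition at $s=0$, it invokes the Cherny--Engelbert theory of \emph{singular} SDEs directly, verifying their conditions (theorems~2.16, 4.2, 4.6(viii) of \cite{ChEn05}) to classify $s=0$ as a singularity of type~3 and $+\infty$ as of type~B, which yields in one stroke a \emph{weak} solution, unique in law, strictly positive and transient; pathwise uniqueness is then obtained separately from the fact that $b_0$ is decreasing (Example~5.2.4 in \cite{KaSh91}), and strong existence follows by Yamada--Watanabe \cite{YaWa71}. You instead exploit that the coefficients are locally Lipschitz in the open interval, patch the classical strong solutions on $[1/n,n]$, and reduce everything to Feller boundary classification: your computation $p(0^+)=-\infty$ is the paper's \eqref{rho_a}, and your Feller integral $v(+\infty)=+\infty$ is, after Fubini, exactly the paper's $I_\infty=+\infty$ in \eqref{I_infty}. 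Your route makes the paper's condition \eqref{I_a} superfluous (it is needed only for the Cherny--Engelbert type classification, not for inaccessibility of $0$ when the interior theory is classical), it delivers strong existence and pathwise uniqueness simultaneously without Yamada--Watanabe, and it even yields the explicit escape probability $\bigl(p(s_0)-p(a)\bigr)/\bigl(L-p(a)\bigr)$, which the paper does not state; what the paper's heavier machinery buys is generality (merely measurable coefficients) and the refined singularity types that it reuses verbatim in the proof of theorem~\ref{thm_ex_un_OU}. Two small points to tighten: the dichotomy ``hit $p(a)$ or converge to $L$'' rests not on unattainability of $+\infty$ alone but on the standard quadratic-variation argument excluding convergence of the local martingale $Y_t=p(S_t)$ to an \emph{interior} point (here $d\langle Y\rangle_t=p'(S_t)^2m^{-2}\,dt$ with $p'$ bounded away from $0$ on compacts), and the optional-stopping identity needs $P(T_a\wedge T_\beta<\infty)=1$, which follows from the same observation; both are routine, but they are the load-bearing steps of your transience claim.
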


\begin{remark}	\label{rem_ex_un_b0}
We quickly (and somewhat roughly) recall the definition of strong existence and 
pathwise uniqueness --- for an in depth overview of the various notions of existence 
and uniqueness of solutions of stochastic differential equations and their 
interrelations we refer the interested reader to, e.g., \cite[Sect.~1.1]{ChEn05}. 
The existence of a strong solution of~\eqref{SDE_S} means that for any given 
Wiener process $W$ on the real line, there exists a solution $S$ of~\eqref{SDE_S} 
which is adapted to the filtration generated by $W$. Thus the paths of $S$ can 
be considered as adapted functionals of the paths of~$W$. Pathwise uniqueness means 
that if $S$ and $S'$ are two solutions defined on the same probability space with 
the same initial condition, and with the same driving Wiener process $W$, then 
$P(S_t=S'_t,\,t\in\R_+)=1$.
\end{remark}

For the proof of theorem~\ref{thm_ex_un_b0} show first two lemmas. In order to 
simplify our notation for the following discussion we shall temporarily set $m=1$. 
The first step is to prove the analogue statement as in theorem~\ref{thm_ex_un_b0} 
for \emph{weak} existence and uniqueness:

\begin{lemma} \label{lem_wex_wun_b0}
For every starting point $s_0\in (0,+\infty)$ the SDE~\eqref{SDE_S} has a 
weak solution $S=(S_t,\,t\in\R_+)$ which is unique in law, and with paths which 
are $P$--a.s.\ strictly positive. Moreover, the solutions are transient in the following 
sense: For every $a>0$ and every initial condition $s_0>a$ the event $\{T_a=+\infty\}$, 
where $T_a$ is the first hitting time of $a$ by $S$, has strictly positive probability, 
and on this set $\lim_{t\to\infty} S_t=+\infty$ $P$--a.s. 
\end{lemma}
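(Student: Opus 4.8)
The plan is to treat \eqref{SDE_S} (with $m=1$) as a one--dimensional It\^o diffusion on the open interval $(0,+\infty)$ and to exploit that its diffusion coefficient is constant and nondegenerate, so that the only difficulty is the singularity of the drift $b_0(s)=\frac{d-1}{2}\coth(s)$ at the left endpoint $s=0$. On every compact subinterval of $(0,+\infty)$ the coefficients are smooth and bounded, hence classical results already provide a weak solution that is unique in law up to the time $\zeta$ at which $S$ leaves $(0,+\infty)$; the whole lemma will follow once the behaviour of $S$ at the two boundary points $0$ and $+\infty$ is understood. The natural tool for this is Feller's boundary classification via the scale function and the speed measure, and it is precisely in this singular situation that the results of Cherny and Engelbert~\cite{ChEn05} apply.

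First I would compute the scale function. Solving $A_d p=0$, i.e.\ $\frac12 p''+b_0 p'=0$, yields $p'(s)=\exp\bigl(-\int^s (d-1)\coth(u)\,du\bigr)=\sinh(s)^{-(d-1)}$ up to a positive constant, and correspondingly the speed density $m'(s)=2/p'(s)=2\sinh(s)^{d-1}$. Hence $p(s)=\int_c^s \sinh(u)^{-(d-1)}\,du$ for a fixed reference point $c$. Near $s=0$ one has $\sinh(u)^{-(d-1)}\sim u^{-(d-1)}$ with $d-1\ge 1$, so the integral diverges and $p(0^+)=-\infty$; near $s=+\infty$ one has $\sinh(u)^{-(d-1)}\sim 2^{d-1}e^{-(d-1)u}$, so $p(+\infty)<+\infty$. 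I would then read off the boundary types from Feller's integrability criteria. Since $p(0^+)=-\infty$, the left endpoint lies at infinite distance in natural scale; checking the corresponding Feller integral $\int_0 p'(y)\,m([y,c])\,dy=+\infty$ (using that $m([0,c])<\infty$) confirms that $0$ is an inaccessible, natural boundary, which already yields the $P$--a.s.\ strict positivity of the paths. At $+\infty$, although $p(+\infty)<+\infty$, the speed measure is infinite there, and the Feller integral $\int^{+\infty} p'(y)\,m([c,y])\,dy$ has an integrand asymptotic to a positive constant and hence diverges; therefore $+\infty$ is also a natural boundary, not reached in finite time, so there is no explosion. Consequently $\zeta=+\infty$ $P$--a.s., and the local weak solution furnished by the Engelbert--Schmidt theory extends to a global weak solution on $(0,+\infty)$, unique in law.

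Finally, transience follows from the same scale function. For $0<a<s_0<b<+\infty$ the standard one--dimensional formula gives $P(T_a<T_b)=\bigl(p(b)-p(s_0)\bigr)/\bigl(p(b)-p(a)\bigr)$; letting $b\to+\infty$ and using $p(+\infty)<+\infty$ yields $P(T_a=+\infty)\ge \bigl(p(s_0)-p(a)\bigr)/\bigl(p(+\infty)-p(a)\bigr)>0$ for every $s_0>a$. On the event $\{T_a=+\infty\}$ the process stays in $(a,+\infty)$, where $p(S)$ is a continuous local martingale bounded from below; it must therefore converge, and the only admissible limit is the finite upper boundary $p(+\infty)$, whence $S_t\to+\infty$.

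The main obstacle throughout is the singular drift at $0$: the classical Lipschitz--based existence and uniqueness theorems do not apply, and one must instead verify by hand the boundary integrals at the singular endpoint. Once these computations confirm that both $0$ and $+\infty$ are natural boundaries, all four assertions of the lemma --- weak existence, uniqueness in law, strict positivity, and transience --- follow from the one--dimensional diffusion theory.
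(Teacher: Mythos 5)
Your proposal is correct, and its computational core is the same as the paper's: the paper likewise computes the scale density $\rho(s)=\mathrm{const}\cdot\sinh(s)^{-(d-1)}$ (your $p'$), shows $\int_0^a\rho(s)\,ds=+\infty$ and $\int_a^\infty\rho(s)\,ds<+\infty$, and verifies Feller-type integral conditions at both endpoints, then derives exactly the transience you obtain from $p(+\infty)<+\infty$. The difference is the packaging. The paper funnels these computations through Cherny--Engelbert's theory of singular SDEs, verifying their Theorems~2.16, 4.2 and 4.6(viii); in particular it checks the condition $I_a=\int_0^a\bigl(1+|b_0(s)|\bigr)\rho(s)^{-1}|\kappa_a(s)|\,ds<+\infty$, whose extra factor $(1+|b_0|)$ is what that framework needs to classify the singularity at $0$ (``type~3'') and to settle weak existence and uniqueness in law directly in the presence of the singular drift, together with $I_\infty=+\infty$ for the behaviour at infinity (``type~B''). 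You instead localize on compact subintervals of $(0,+\infty)$, where the coefficients are smooth, get well-posedness up to the exit time $\zeta$ from classical theory, and then rule out exit via Feller's boundary classification; this is legitimate here precisely because the only singularity sits at the boundary, and it buys a more elementary, textbook-level argument at the price of being less robust than the Cherny--Engelbert machinery (which would also handle merely measurable drifts and solutions started at, or interacting with, the singular point). Two small corrections: first, your classification of $0$ as a \emph{natural} boundary is wrong --- since $\coth(s)\sim 1/s$ the process is comparable to a Bessel process of dimension $d$ near the origin, and the entrance integral $\int_{0}^{c}p'(y)\,m\bigl((0,y]\bigr)\,dy\sim\int_0^c y\,dy<+\infty$, so $0$ is an \emph{entrance} boundary; however, your argument uses only inaccessibility, which does follow from $p(0+)=-\infty$ and the divergent integral $\int_0^c p'(y)\,m([y,c])\,dy$ that you compute, so nothing in the proof breaks. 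Second, in the final step the claim that ``the only admissible limit is $p(+\infty)$'' tacitly uses the standard fact that a regular one-dimensional diffusion with nondegenerate diffusion coefficient cannot converge to an interior point; this deserves a sentence, but is routine.
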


\begin{remark} \label{rem_wex_wun_b0}
Also here we first want to quickly recall the meaning of the existence and uniqueness 
statement. That the stochastic differential equation~\eqref{SDE_S} has a weak 
solution roughly speaking means that on some filtered probability space there exists
a \emph{pair} $(S,W)$ of adapted processes so that the integrated version 
of~\eqref{SDE_S} holds true. Uniqueness in law of the solution means that if 
$(S,W)$ and $(S',W')$ are two such pairs (possibly defined on different probability 
spaces) with the same initial condition, then the laws of $S$ and $S'$ coincide. 
Furthermore we remark in passing that the existence of a weak solution is equivalent 
to the existence of the associated martingale problem (e.g., \cite[Theorem~1.27]{ChEn05} 
or \cite[Proposition~IV.2.1]{IkWa89}).
\end{remark}

\begin{proof}[Proof of lemma~\ref{rem_wex_wun_b0}]
We show that the conditions of theorems~2.16, 4.2, and part~(viii) of theorem~4.6 
in~\cite{ChEn05} hold true. First we remark that for every $a>0$ the drift 
\begin{equation} \label{eq_drift_b0}
	b_0(s) = \frac{d-1}{2}\, \coth(s),\qquad s>0,
\end{equation}
obviously belongs to $L^1_{{\rm loc}}([a,+\infty))$ so that the origin $s=0$ is indeed 
an isolated singularity in the sense of~\cite[Sect.\ 2.1]{ChEn05}. We fix some $a>0$ 
for the remainder of this proof. Next we compute the density $\rho$ of the so-called 
scale functions. Since in our case the diffusion coefficient is equal to~$1$, $\rho$ 
is given by
\begin{equation}	\label{eq_rho_b0}
	\rho(s) = \exp\Bigl(2\int_s^a b_0(u)\,du\Bigr)
			= \frac{\sinh(a)^{d-1}}{\sinh(s)^{d-1}}.
\end{equation}
Furthermore we define the scale functions
\begin{align*}
	\gk_a(s)      &= -\int_s^a \rho(u)\,du,\qquad s\in(0,a],\\
	\gk_\infty(s) &= -\int_s^\infty \rho(u)\,du,\qquad s\in[a,+\infty).
\end{align*}
Since $d\ge 2$, we clearly have from~\eqref{eq_rho_b0} that 
\begin{align}
	\int_0^a \rho(s)\,ds &=+\infty, \label{rho_a}\\
	\int_a^\infty \rho(s)\,ds &< +\infty \label{rho_infty}.
\end{align}
Moreover we claim that the following are true:
\begin{align}
	I_a 		&= \int_0^a \bigl(1+|b_0(s)|\bigr)\, \rho(s)^{-1}\, |\gk_a(s)|\,ds <+\infty,\label{I_a}\\
	I_\infty    &= \int_a^\infty \rho(s)^{-1}\, |\gk_\infty(s)|\,ds = +\infty.\label{I_infty}
\end{align}
The integral~$I_a$ is equal to
\begin{equation*}
	\int_0^a \Bigl(1+ \frac{(d-1)\cosh(s)}{2\sinh(s)}\Bigr) \sinh(s)^{d-1}
				\Bigl(\int_s^a \sinh(u)^{-(d-1)}\,du\Bigr)\,ds.
\end{equation*} 
Since $s\mapsto \sinh(s)$ is convex, we have for all $s\in(0,a]$ the inequalities
\begin{equation}	\label{sinh}
	s \le \sinh(s) \le \sinh(a)\, s.
\end{equation}
Therefore we can estimate as follows
\begin{equation*}
	I_a \le \sinh(a)^{d-1} \int_0^a \Bigl(1 + \frac{d-1}{2}\,\coth(x)\Bigr)\, s^{d-1}
			\Bigl(\int_s^a u^{-(d-1)}\,du\Bigr)\,ds.
\end{equation*}
For $d=2$ we get
\begin{equation*}
	I_a \le \sinh(a)^{d-1} \int_0^a \Bigl(1 + \frac{d-1}{2}\,\coth(s)\Bigr)\, 
										s \ln\Bigl(\frac{a}{s}\Bigr)\,ds<+\infty,
\end{equation*}
while for $d\ge 3$ we find
\begin{equation*}
\begin{split}
	I_a \le \frac{\sinh(a)^{d-1}}{d-2} \int_0^a \Bigl(1 
				+ \frac{d-1}{2}\,&\coth(s)\Bigr)\, s^{d-1}\\[.5ex]
		&\times\,\bigl(s^{-(d-2)} - a^{-(d-2)}\bigr)\,ds <+\infty.
\end{split}
\end{equation*}
Hence~\eqref{I_a} is proved, and together with~\eqref{rho_a} this shows that the
conditions of theorem~2.16 in~\cite{ChEn05} are fulfilled. As a consequence we
get the statement that the SDE~\eqref{SDE_S} has for every starting point $s_0>0$ a 
unique weak solution up to the first hitting time $T_a$, and the solution is a.s.\ strictly 
positive. Moreover --- and this will be more important below --- the singularity at 
$s=0$ is of \emph{type~3} in the nomenclature of~\cite[p.~37]{ChEn05}.

Next we show~\eqref{I_infty}. We have
\begin{equation*}
	I_\infty = \int_a^\infty \sinh(s)^{d-1} \Bigl(\int_s^\infty \sinh(s)^{-(d-1)}\Bigr)\,ds.
\end{equation*}
With the inequalities
\begin{equation}	\label{sinh_infty}
	c_a\, e^s \le \sinh(s) \le \frac{1}{2}\,e^s,\qquad s\in [a,+\infty),
\end{equation}
where $c_a = (1-\exp(-2a))/2$, we obtain (recall that $d\ge 2$)
\begin{align*}
	I_\infty &\ge {\rm const.} \int_a^\infty e^{(d-1)s} \Bigl(\int_s^\infty e^{-(d-1)u}\,du\Bigr)\,ds\\
			 &=   {\rm const.} \int_a^\infty e^{(d-1)s}\,e^{-(d-1)s}\,ds\\
			 & =+\infty,
\end{align*}
and~\eqref{I_infty} is proved. Together with~\eqref{rho_a} this result shows
that the hypotheses of theorem~4.2 in~\cite{ChEn05} are satisfied. This entails
that for every start point $s_0\ge a$ there exists a unique weak solution of~\eqref{SDE_S}
up to the first hitting time $T_a$, and the solution is transient in the sense
stated in the lemma. Moreover, we obtain that the behavior of the SDE~\eqref{SDE_S}
at infinity is of \emph{type B} as defined in~\cite[p.~82]{ChEn05}.

Finally, with the result that the SDE~\eqref{SDE_S} has a (right) singularity of 
type~3 and the behavior of type~B at infinity, we can apply theorem~4.6.(viii)
in~\cite{ChEn05}, which implies the statement of the lemma. (We remark that the
statement of the above quoted theorem in~\cite{ChEn05} is formulated there for a 
two-sided singularity, but actually the properties of the  SDE on the negative half 
axis do not enter the statement nor its proof at all. In order to bring our situation 
precisely into the one discussed in chapter~4 of~\cite{ChEn05}, we simply could 
interpret the SDE~\eqref{SDE_S} as one formulated on all of $\R\setminus\{0\}$, and 
we would get the same result.) 
\end{proof}

Next we show

\begin{lemma}	\label{lem_punique}
The solutions of the SDE~\eqref{SDE_S} are pathwise 
unique.
\end{lemma}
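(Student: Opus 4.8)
The plan is to exploit two structural features of the SDE~\eqref{SDE_S}: its diffusion coefficient is the \emph{constant} $1/m$, and (restoring the convention $m=1$) its drift $b_0(s)=\frac{d-1}{2}\coth(s)$ is \emph{strictly decreasing} on $(0,+\infty)$, since $b_0'(s)=-\frac{d-1}{2}\sinh(s)^{-2}<0$. Consequently $b_0$ satisfies the one-sided (dissipative) Lipschitz condition $(b_0(x)-b_0(y))(x-y)\le 0$ for all $x,y>0$, and on every interval $[1/n,+\infty)$ it is genuinely Lipschitz with constant $L_n=\frac{d-1}{2}\sinh(1/n)^{-2}<+\infty$. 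The only real difficulty is the singularity of $b_0$ at the origin, which I would handle by a localization keeping both solutions bounded away from~$0$.

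Concretely, let $S$ and $S'$ be two solutions on the same filtered probability space, driven by the same Wiener process $W$ and started at the same $s_0>0$. By the uniqueness in law already established in lemma~\ref{lem_wex_wun_b0}, both processes have $P$--a.s.\ continuous, strictly positive paths. I would therefore introduce the stopping times
\[
	\tau_n = \inf\{t\ge 0 : \min(S_t,S'_t)\le 1/n\},
\]
and observe that, since the two continuous paths start at $s_0>0$ and never reach~$0$, one has $\tau_n\uparrow+\infty$ $P$--a.s.

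Because the diffusion coefficient is constant, the stochastic integrals of $S$ and $S'$ coincide and cancel in the difference, so the stopped difference $D_t=S_{t\wedge\tau_n}-S'_{t\wedge\tau_n}$ is an absolutely continuous, finite-variation process,
\[
	D_t = \int_0^{t\wedge\tau_n}\bigl(b_0(S_r)-b_0(S'_r)\bigr)\,dr.
\]
Differentiating $D_t^2$ and using that on $[0,\tau_n]$ both solutions lie in $[1/n,+\infty)$, where the monotonicity $(b_0(S_r)-b_0(S'_r))(S_r-S'_r)\le 0$ holds, I obtain $D_t^2=2\int_0^{t\wedge\tau_n}D_r\,(b_0(S_r)-b_0(S'_r))\,dr\le 0$, whence $D_t\equiv 0$ on $[0,\tau_n]$. (Alternatively, a Gronwall estimate with the finite constant $L_n$ on $[1/n,+\infty)$ gives the same conclusion without invoking the monotonicity.) Letting $n\to+\infty$ and using $\tau_n\uparrow+\infty$ yields $S_t=S'_t$ for all $t$, $P$--a.s., which is the asserted pathwise uniqueness.

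The step I expect to be the genuine obstacle is not the comparison estimate itself --- essentially the classical dissipative/Lipschitz argument --- but the justification that the localization exhausts $\R_+$, i.e.\ that $\tau_n\uparrow+\infty$ almost surely. This rests precisely on the a.s.\ strict positivity of the paths, which is not automatic in the presence of the singular drift and which is exactly the output of lemma~\ref{lem_wex_wun_b0}. Once pathwise uniqueness is combined with the weak existence from that lemma, the Yamada--Watanabe theorem upgrades the weak solution to a pathwise unique strong solution, giving theorem~\ref{thm_ex_un_b0}.
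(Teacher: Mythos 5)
Your argument is correct and is essentially the paper's own proof: the paper disposes of the lemma in one line by citing Example~5.2.4 in \cite{KaSh91}, which is exactly the constant-diffusion, decreasing-drift comparison you carry out explicitly (the difference of two solutions is absolutely continuous, and monotonicity of $b_0$ forces its square to be non-increasing). Note only that your localization via $\tau_n$ is dispensable, since the dissipativity inequality $(b_0(x)-b_0(y))(x-y)\le 0$ holds globally on $(0,+\infty)$ and local integrability of the drift along the (a.s.\ strictly positive) paths is already part of the definition of a solution.
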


\begin{proof}
This statement is a direct consequence of the fact that the drift $x\mapsto 
b_0(x) = \coth(x)$ is decreasing on $(0,+\infty)$, cf.\ Example~5.2.4 in~\cite{KaSh91}.
\end{proof}

Lemma~\ref{lem_punique} allows the application of the Yamada--Watanabe theorem~\cite{YaWa71} 
(cf.\ also~\cite[Theorem~1.1, Chap.~IV]{IkWa89} or \cite[Theorem~1.7, Chap.~IX]{ReYo91}) 
which entails that we even have strong solutions, and thereby concludes the proof
of theorem~\ref{thm_ex_un_b0}.

Having established the existence of Wiener processes on the mass shells $\cV_m^d$, 
$d\ge 2$, we now turn to the special cases $d=2$, $d=3$, and provide more explicit 
expressions descriptions of theses processes in terms of stochastic differential 
equations in hyperbolic as well as cartesian coordinates.

%
%

\subsection{The Case \boldmath $d=2$}	\label{ssect_d_equal_2}
Consider formula~\eqref{LB2} for the Laplacian on $\cV_m^2$. Thus the associated
It\^o stochastic differential equations for stochastic processes $S$, and $\Phi$ 
in the $s$, resp.\ $\vp$ coordinates are
\begin{equation}	\label{SDE_d_2}
\begin{split}
	dS_t    &= \frac{1}{2m^2}\,\coth(S_t)\,dt + \frac{1}{m}\,dW^1_t\\
	d\Phi_t &= \frac{1}{m\sinh(S_t)}\,dW^2_t,
\end{split}	
\end{equation}
where $W^1$ and $W^2$ are independent standard one dimensional Wiener processes.
Of course, the solutions of the SDE for $\Phi$ have to be taken modulo~$2\pi$.
We transform these equations into three dimensional cartesian coordinates
using It\^o calculus. A straightforward computation with It\^o's formula yields
the following stochastic differential equations for the cartesian components 
$P=(P_0, P_1, P_2)$ 
\begin{equation}	\label{CSDE_2}
\begin{split}
	dP_0(t) &= \frac{P_0(t)}{m^2}\,dt + \frac{r(t)}{m}\,dW^1_t\\[1ex]
	dP_1(t) &= \frac{P_1(t)}{m^2}\,dt + \frac{P_0(t)P_1(t)}{mr(t)}\,dW^1_t 
							- \frac{P_2(t)}{r(t)}\,dW^2_t\\[1ex]
	dP_2(t) &= \frac{P_2(t)}{m^2}\,dt + \frac{P_0(t)P_2(t)}{mr(t)}\,dW^1_t + \frac{P_1(t)}{r(t)}\,dW^2_t,
\end{split}
\end{equation}
where we have set $r(t) = \sqrt{\mathstrut P_1(t)^2+P_2(t)^2}$. 
An application of It\^o calculus yields the associated generator in cartesian 
coordinates acting on smooth functions on~$\R^3$:
\begin{equation}	\label{eqGenCart_2}
\begin{split}
	L_2 = \frac{1}{2m^2}\Bigl((p_0^2-m^2)\,\p_0^2 + \sum_{i=1}^2 &(p_i^2+m^2)\,\p_i^2\\
				&+2\sum_{k>l=0}^2 p_k p_l \p_k\p_l+ 2 \sum_{k=0}^2 p_k \p_k\Bigr),
\end{split}
\end{equation}
where $\p_i$, $i=0$, $1$, $2$, denotes the usual partial derivative in the $i$--th 
coordinate direction. We want to point out the appearance of a first order term with  the 
linear ``drift'' coefficient function $\mb{p}\mapsto 1/m^2 \mb{p}$ in the generator~$L$.

\subsection{The Case \boldmath $d=3$}	\label{ssect_d_equal_3}
From the form~\eqref{LB3} of $\gD_3$ we deduce the following system of stochastic differential
equations for coordinate processes $S$, $\Theta$, $\Phi$:
\begin{equation}\label{SDE_d_3}
\begin{split}
	dS_t      &= \frac{1}{m^2}\,\coth(S_t)\,dt + \frac{1}{m}\,dW^1_t\\
	d\Theta_t &= \frac{1}{2m^2 \sinh(S_t)^2}\,\cot(\Theta_t)\,dt + \frac{1}{m\sinh(S_t)}\,dW^2_t\\
	d\Phi_t   &= \frac{1}{m\sinh(S_t)\sin(\Theta_t)}\,dW^3_t,
\end{split}
\end{equation}
where $W^1$, $W^2$, and $W^3$ are independent a standard one dimensional Wiener processes.
It is clear that also here the solutions of the equation for $\Phi$ have to taken modulo~$2\pi$.
A straightforward --- even though somewhat lengthy --- calculation with It\^o's formula
gives the following stochastic differential equations in cartesian coordinates of $\R^4$:
\begin{equation}	\label{CSDE_3}
\begin{split}
	dP_0(t) &= \frac{3}{2m^2}\,P_0(t)\,dt + \frac{R(t)}{m}\,dW^1_t\\
	dP_1(t) &= \frac{3}{2m^2}\,P_1(t)\,dt + \frac{P_0(t) P_1(t)}{m R(t)}\,dW^1_t\\ 
			&\hspace{8em} + \frac{P_1(t) P_3(t)}{r(t) R(t)}\,dW^2_t- \frac{P_2(t)}{r(t)}\,dW^3_t\\
	dP_2(t) &= \frac{3}{2m^2}\,P_2(t)\,dt + \frac{P_0(t) P_2(t)}{m R(t)}\,dW^1_t\\ 
			&\hspace{8em} + \frac{P_2(t) P_3(t)}{r(t) R(t)}\,dW^2_t + \frac{P_1(t)}{r(t)}\,dW^3_t\\
	dP_3(t) &= \frac{3}{2m^2}\,P_3(t)\,dt + \frac{P_0(t) P_3(t)}{m R(t)}\,dW^1_t - \frac{r(t)}{R(t)}\,dW^2_t.
\end{split}
\end{equation}
In the last equations we have set $R(t) = \sqrt{\mathstrut P_1(t)^2 + P_2(t)^2 + P_3(t)^2}$,
and $r(t)$ is as above. The generator has in cartesian coordinates the following form
\begin{equation}	\label{eqGenCart_3}
\begin{split}
	L_3 = \frac{1}{2m^2}\Bigl((p_0^2-m^2)\,\p_0^2 + \sum_{i=1}^3 &(p_i^2+m^2)\,\p_i^2\\
				&+2\sum_{k>l=0}^3 p_k p_l \p_k\p_l+ 3 \sum_{k=0}^3 p_k \p_k\Bigr),
\end{split}
\end{equation}
and also in this case we remark the linear drift term with a linear coefficient function 
$\mb{p}\mapsto 3/2 m^2 \mb{p}$.

\section{Relativistic Ornstein--Uhlenbeck Process}	\label{sect_OU}
Based on the Wiener process constructed on the mass shell $\cV_m^d$ in the previous section, 
we shall construct here stochastic processes on $\cV^d_m$ which resemble the standard 
Ornstein--Uhlenbeck process. As we have recalled in section~\ref{sect_Intro}, in the usual 
euclidean setting the Ornstein--Uhlenbeck process is constructed by adding (via a stochastic 
differential equation) a linear drift term to a standard Wiener process, which pushes the 
Wiener process back towards the origin. As a consequence, the classical Ornstein--Uhlenbeck 
process has an invariant distribution which is given
by a centered normal law. 

Consider first the special cases $d=2$, $3$, and the SDE's ~\eqref{CSDE_2}, \eqref{CSDE_3}, 
for the Wiener processes on the mass shell. Clearly, one cannot simply add linear drift
terms to these SDE's, since there is no guarantee that the resulting process would continue
to live on the mass shells --- actually, as our computations below show, this will definitely 
not be the case. 

Instead we introduce --- for general space dimension $d$ --- an additional drift term into the 
SDE~\eqref{SDE_S} in hyperbolic coordinates. Then we take the skew product of this new process
in the $s$-coordinate with a standard Wiener process on the unit sphere $S^{d-1}$ 
as in~\eqref{tau}, \eqref{skewpr}. Transforming this process via~\eqref{hyp_coord} into 
a stochastic process with values in $\R^{1+d}$ we obtain a process which lives on 
the mass shell $\cV^d_m$, if started thereon.

It turns out that a simple, natural choice for the additional drift term is given
by $s\mapsto -\gg/2 m^2\,\tanh(s)$, where $\gg$ is some non-negative constant. So we consider 
now the SDE
\begin{equation}	\label{SDE_tanh}
	dS_t = b_\gg(S_t)\,dt + \frac{1}{m^2}\,dW_t,\quad S_0 =s_0\in (0,+\infty),\qquad t\in\R_+,
\end{equation}
with
\begin{equation}	\label{drift}
	b_\gg(s) = \frac{d-1}{2m^2}\coth(s)- \frac{\gg}{2m^2} \tanh(s),\qquad s\in (0,+\infty).
\end{equation}
Hence for $\gg>d-1$ we have a backward drift which is asymptotically constant with 
value~$(d-\gg-1)/2m^2$. The choice of this additional drift term has two advantages: For one,
it turns out that in the special cases $d=2$, $d=3$ the SDE's in cartesian coordinates
will be supplemented with almost linear drifts, which are directed towards the origin and 
compensate the linear outward drifts which we had observed for Wiener processes in the 
SDE's~\eqref{CSDE_2}, \eqref{CSDE_3}. Therefore this shows some similarity with the 
construction of the classical Ornstein--Uhlenbeck process. Moreover, for $\gg$ large enough 
this additional drift yields the existence of an invariant state for the resulting process, 
which can be computed explicitly (as well as some other invariant states, see 
section~\ref{sect_IM}).

For the question of existence and uniqueness of solutions of~\eqref{SDE_tanh} we have 
the following

\begin{theorem}\label{thm_ex_un_OU}
For every initial condition $S_0=s_0\in (0,+\infty)$, the stochastic differential 
equation~\eqref{SDE_tanh} has a pathwise unique, strong solution which is a.s.\ strictly 
positive for all times. For $\gg\in[0,d-1)$ the solution is transient in the same sense
as in theorem~\ref{thm_ex_un_b0}. For $\gg\ge d-1$ the solution is recurrent in the sense 
that if $a>0$ and $s_0>a$ then $P$--a.s.\ $T_a<+\infty$.
\end{theorem}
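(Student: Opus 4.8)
The plan is to follow the same strategy that proved Theorem~\ref{thm_ex_un_b0}, namely to reduce everything to the existence/uniqueness and boundary-classification apparatus from Cherny--Engelbert~\cite{ChEn05} plus the Yamada--Watanabe theorem, since the only change is the added bounded drift term $-\gg/2m^2\,\tanh(s)$. First I would set $m=1$ as before and observe that the new drift
\begin{equation*}
	b_\gg(s) = \frac{d-1}{2}\coth(s) - \frac{\gg}{2}\tanh(s)
\end{equation*}
differs from $b_0$ only by the term $-\gg/2\,\tanh(s)$, which is bounded and $C^\infty$ on all of $[0,+\infty)$; in particular it lies in $L^1_{\rm loc}([a,+\infty))$ for every $a>0$, so the singularity at $s=0$ is still an isolated singularity of exactly the same nature. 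Because $\tanh$ is bounded near the origin, adding it does not alter the leading $\frac{d-1}{2}\coth(s)$ behavior, so the local analysis at $s=0$ — the scale density $\rho$, the integrals $I_a$ and~\eqref{rho_a}, and the classification as a \emph{type~3} singularity — goes through verbatim: the finite correction changes $\rho$ only by a bounded positive factor and cannot affect the convergence/divergence of any of the relevant integrals. This immediately yields strict positivity and weak existence/uniqueness up to $T_a$ from theorem~2.16 of~\cite{ChEn05}, and pathwise uniqueness from Lemma~\ref{lem_punique}, since $b_\gg$ is still decreasing on $(0,+\infty)$ (both $\coth$ is decreasing and $-\tanh$ is decreasing), so the Yamada--Watanabe argument upgrades to strong solutions exactly as before.

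The genuinely new content is the dichotomy between transience and recurrence at infinity, governed by the threshold $\gg = d-1$. For this I would recompute the scale density on $[a,+\infty)$,
\begin{equation*}
	\rho_\gg(s) = \exp\Bigl(2\int_s^a b_\gg(u)\,du\Bigr)
		= \frac{\sinh(a)^{d-1}}{\sinh(s)^{d-1}}\,
		  \Bigl(\frac{\cosh(s)}{\cosh(a)}\Bigr)^{\gg},
\end{equation*}
using $\int \coth = \ln\sinh$ and $\int \tanh = \ln\cosh$. For large $s$ both $\sinh(s)$ and $\cosh(s)$ behave like $\frac{1}{2}e^s$, so $\rho_\gg(s) \sim {\rm const.}\,e^{(\gg-(d-1))s}$. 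The key integral controlling the behavior at infinity is $\int_a^\infty \rho_\gg(s)\,ds$: when $\gg<d-1$ the exponent $\gg-(d-1)$ is negative and this integral converges, reproducing the situation of Lemma~\ref{lem_wex_wun_b0}, whereas when $\gg\ge d-1$ the exponent is nonnegative and the integral diverges. I would then feed these two cases into the boundary classification of chapter~4 of~\cite{ChEn05}: convergence of $\int_a^\infty \rho_\gg$ places infinity in the transient regime (\emph{type~B}), while divergence places it in the recurrent regime, so that the hitting time $T_a$ becomes a.s.\ finite.

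The main obstacle I anticipate is the careful verification of the appropriate Cherny--Engelbert hypotheses in the recurrent case $\gg\ge d-1$, where I can no longer cite theorem~4.2 and type~B as in the previous lemma. I would need to identify which numbered conditions and which boundary-behavior type in~\cite[Chap.~4]{ChEn05} correspond to recurrence, and confirm that the auxiliary integral (analogous to $I_\infty$ of~\eqref{I_infty}) is consistent with that classification. Concretely, one checks that the now-divergent behavior of $\int_a^\infty\rho_\gg$ forces $T_a<+\infty$ almost surely by the standard scale-function argument: a diffusion on natural scale started at $\gk_\infty(s_0)$ is a recurrent local martingale precisely when the right endpoint is inaccessible in the scale sense, which is exactly what the divergence encodes. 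The borderline case $\gg=d-1$, where the drift is asymptotically constant with value $0$ and $\rho_\gg$ tends to a nonzero constant (so $\int_a^\infty\rho_\gg=+\infty$ linearly), must be confirmed to fall on the recurrent side; this is the one spot where I would take extra care to cite the exact statement in~\cite{ChEn05} rather than rely on the asymptotic heuristic.
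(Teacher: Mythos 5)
Your proposal is correct and follows essentially the same route as the paper: the same Cherny--Engelbert machinery with the scale density $\rho(s)=\mathrm{const}\cdot\cosh(s)^{\gg}/\sinh(s)^{d-1}$, the observation that the bounded factor $\cosh(s)^{\gg}$ leaves the type~3 classification at $s=0$ untouched, the threshold $\int_a^\infty\rho(s)\,ds<+\infty$ iff $\gg<d-1$ (with the $I_\infty=+\infty$ check confirming type~B in the transient case), and pathwise uniqueness from the decreasing drift $b_\gg$ plus Yamada--Watanabe. The one citation you left open is resolved in the paper exactly as your heuristic suggests: divergence of $\int_a^\infty\rho$ is behavior of \emph{type~A} at infinity in the nomenclature of \cite[p.~82]{ChEn05}, theorem~4.6.(viii) there still gives the strictly positive weak solution unique in law, and theorem~4.1 of \cite{ChEn05} then yields $T_a<+\infty$ $P$--a.s.\ for all $\gg\ge d-1$, borderline case included.
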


\begin{proof} The proof is quite similar to the one of theorem~\ref{thm_ex_un_b0}, so we 
only sketch it. Again we temporarily put $m^2=1$ for notational simplicity. In this case 
the scale density $\rho$ becomes
\begin{equation*}
	\rho(s) = \frac{\sinh(a)^{d-1}}{\cosh(a)^\gg}\,\frac{\cosh(s)^\gg}{\sinh(s)^{d-1}},\qquad
						s\in (0,+\infty).
\end{equation*}
Therefore the estimations leading to the inequalities~\eqref{rho_a} and~\eqref{I_a}
are completely unaffected by the additional smooth, bounded function $s\mapsto \cosh(s)$, and
we find again that the singularity at $s=0$ is of \emph{type~3}. However, instead 
of~\eqref{rho_infty} we this time get the following for any fixed $a>0$:
\begin{equation}\label{rho_gg}
	\int_a^\infty \rho(s)\,ds\quad 
		\begin{cases}
			&< +\infty,\quad\text{if $\gg<   d-1$},\\
			&= +\infty,\quad\text{if $\gg\ge d-1$}.
		\end{cases}
\end{equation}
This shows that for $\gg\ge d-1$ we get now \emph{type~A} for the behavior
at $+\infty$, as defined on p.~82 in~\cite{ChEn05}. For $\gg\in[0,d-1)$ we have
to estimate the present analogue of $I_\infty$, see~\eqref{I_infty}. To this end, we use 
in addition to~\eqref{sinh} the trivial bounds $1/2\,\exp(s)\le \cosh(s)\le \exp(s)$. The
result is $I_\infty=+\infty$. Hence for $\gg\in[0,d-1)$ the behavior at infinity
is again of \emph{type~B}. Now we apply once more theorem~4.6.(viii) in~\cite{ChEn05} 
to conclude that for every initial condition $S_0=s_0>0$ we have the existence of a 
strictly positive weak solution which is unique in law.

Observe that the drift $b_\gg$ is monotone decreasing on $(0,+\infty)$, so that by the 
same argument as in the proof of lemma~\ref{lem_punique} pathwise uniqueness of the 
solutions holds true. Another application of the Ya\-ma\-da--Watanabe theorem provides 
us with the existence of a strong solution for every initial condition 
$S_0=s_0\in (0,+\infty)$.

Finally we remark that theorem~4.1, \cite[p.~81]{ChEn05}, states that the behavior 
of type~A at infinity of the SDE entails that the solutions are recurrent in the sense
of the theorem.
\end{proof}

As in section~\ref{sect_BM}, let $\Theta=(\Theta_t,\,t\in\R_+)$ be a standard Wiener 
process on the $d-1$ dimensional unit sphere $S^{d-1}$, and define the stochastic time 
scale~$\tau$ as in~\eqref{tau} where this time we choose for $S$ the process
defined by the SDE~\eqref{SDE_tanh}. Consider the skew product 
\begin{equation}	\label{ROU_hyp}
	\bigl((S_t,\Theta_{\tau(t)}),\,t\in\R_+\bigr).
\end{equation}
We transform this process with the equations~\eqref{hyp_coord} into a
stochastic process $\mb{P} = \bigl(\mb{P}(t),\,t\in\R_+\bigr)$, $\mb{P}(t) =
\bigl(P_0(t),P(t)\bigr)$, on the mass shell $\cV^d_m$ written in cartesian coordinates:
\begin{equation}	\label{ROUM}
\begin{split}
	P_0(t) &= m\,\cosh(S_t),\\
	P(t)   &= m\,\sinh(S_t)\,\go(\Theta_{\tau(t)}).
\end{split}
\end{equation}
We call the process $\mb{P}$ the \emph{relativistic Ornstein--Uhlenbeck momentum process}
in $1+d$ dimensions. The \emph{relativistic Ornstein--Uhlenbeck velocity process}
$V=\bigl(V(t),\,t\in\R_+\bigr)$ in $1+d$ dimensions is then defined as
\begin{equation} \label{ROUV}
	V(t) = \frac{P(t)}{P_0(t)} = \tanh(S_t)\,\go(\Theta_{\tau(t)}),\qquad t\in\R_+.
\end{equation}
(Recall that we work with physical units so that the speed of light $c$ in the vacuum 
is equal to $1$. In other units, we have an additional factor $c$ on the right hand side.)

Similarly as for the Wiener process, which we treated in section~\ref{sect_BM},
for the cases of dimensions $d=2$ and $d=3$, we give an alternative, more 
explicit description of the relativistic Ornstein--Uhlenbeck processes in cartesian 
coordinates in terms of stochastic differential equations instead of using the
skew product.

For $d=2$ we replace the first equation in~\eqref{SDE_d_2} by~\eqref{SDE_tanh}
and transform them into cartesian coordinates with a straightforward computation 
using It\^o's formula. This yields the following SDE's for the components of $\mb{P}$
\begin{equation}	\label{CSDE_2_drift}
\begin{split}
	dP_0(t) &= \frac{1}{2m^2}\,(2-\gg)P_0(t)\,dt +\frac{\gg}{2P_0(t)}\,dt
					+ \frac{r(t)}{m}\,dW^1_t\\[1ex]
	dP_1(t) &=  \frac{1}{2m^2}\,(2-\gg)P_1(t)\,dt + \frac{P_0(t)P_1(t)}{mr(t)}\,dW^1_t 
					- \frac{P_2(t)}{r(t)}\,dW^2_t\\[1ex]
	dP_2(t) &= \frac{1}{2m^2}\,(2-\gg)P_2(t)\,dt + \frac{P_0(t)P_2(t)}{mr(t)}\,dW^1_t 
					+ \frac{P_1(t)}{r(t)}\,dW^2_t,
\end{split}
\end{equation}
where we have set $r(t) = \sqrt{\mathstrut P_1(t)^2 + P_2(t)^2}$. Thus, for $\gg\ge 2$
the original outward drift of the Wiener process is compensated, while for $\gg>2$ 
we have an effective drift towards the origin, and except for the term $\gg/2P_0(t)\,dt$ this 
drift acts in a linear way as for the classical Ornstein--Uhlenbeck process. The additional 
non-linear term in the equation for $P_0$ takes care that the process stays 
on the mass shell. Note however, that this term is bounded from above by $\gg/2m$ since 
$P_0(t)\ge m$ on the mass shell. 

For $d=3$ we obtain
\begin{equation}	\label{CSDE_3_drift}
\begin{split}
	dP_0(t) &= \frac{1}{2m^2}\,(3-\gg)P_0(t)\,dt + \frac{\gg}{2P_0}\,dt 
					+ \frac{R(t)}{m}\,dW^1_t\\
	dP_1(t) &= \frac{1}{2m^2}\,(3-\gg)P_1(t)\,dt + \frac{P_0(t) X_1(t)}{m R(t)}\,dW^1_t \\ 
			&\hspace{8em}+ \frac{P_1(t) P_3(t)}{r(t) R(t)}\,dW^2_t- \frac{P_2(t)}{r(t)}\,dW^3_t\\
	dP_2(t) &= \frac{1}{2m^2}\,(3-\gg)P_2(t)\,dt + \frac{P_0(t) P_2(t)}{m R(t)}\,dW^1_t \\ 
			&\hspace{8em}+ \frac{P_2(t) P_3(t)}{r(t) R(t)}\,dW^2_t + \frac{P_1(t)}{r(t)}\,dW^3_t\\
	dP_3(t) &= \frac{1}{2m^2}\,(3-\gg)P_3(t)\,dt + \frac{P_0(t) P_3(t)}{m R(t)}\,dW^1_t 
				- \frac{r(t)}{R(t)}\,dW^2_t,
\end{split}
\end{equation}
where $R(t) = \sqrt{\mathstrut P_1(t)^2 + P_2(t)^2 + P_3(t)^2}$, and $r(t)$ is as in
the case $d=2$ above. So in this case we have to have $\gg\ge 3$ in order to compensate 
the outward drift of the Wiener process, and for $\gg>3$ we have as above an almost linear 
drift term pushing the motion towards the origin.

The generators of these processes are those obtained for the Wiener process plus
the additional drift terms derived above, namely for $d=2$
\begin{equation}	\label{eqGenCart_2_drift}
	L_2 - \frac{\gg}{2m^2}\,\sum_{k=0}^2 p_k\p_k + \frac{\gg}{2p_0}\,\p_0,
\end{equation}
and for $d=3$
\begin{equation}	\label{eqGenCart_3_drift}
	L_3 - \frac{\gg}{2m^2}\,\sum_{k=0}^3 p_k\p_k + \frac{\gg}{2p_0}\,\p_0,
\end{equation}
where $L_2$ and $L_3$ are as in~\eqref{eqGenCart_2}, ~\eqref{eqGenCart_3} respectively.
Note that $p_0\ge m$ so that the non-linear drift coefficients $\gg/2p_0$ in the time direction are 
bounded from above by $\gg/2m$, and they asymptotically vanish as $p_0$ tends to~$+\infty$.

\section{Invariant Measures} \label{sect_IM}
Define a measure $\mu_d$ on $(\R_+,\cB(\R_+))$ by
\begin{equation}	\label{mu_d}
	\mu_d(ds) = \sinh(s)^{d-1}\,ds,\qquad s\in\R_+.
\end{equation}

\begin{lemma}	\label{lem_inv_meas_S}
For every $\gg>d-1$ and every initial condition $S_0$, the solution $S=(S_t,\,t\in\R_+)$ 
of the stochastic differential equation~\eqref{SDE_tanh} has the following invariant 
measure 
\begin{equation}	\label{eq_inv_meas_S}
	\frac{1}{N_{d,\gg}}\,\cosh(s)^{-\gg}\mu_d(ds),
\end{equation}
where
\begin{equation}	\label{eq_norm}
	N_{d,\gg} = \int_0^\infty \cosh(s)^{-\gg}\mu_d(ds).
\end{equation}
\end{lemma}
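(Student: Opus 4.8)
The plan is to recognize $S$ as a regular one-dimensional diffusion on $(0,+\infty)$ and to identify \eqref{eq_inv_meas_S} as its normalized speed measure, which for such a diffusion is the canonical candidate for an invariant probability measure. The generator of \eqref{SDE_tanh} acting on $f\in C^2((0,+\infty))$ is $Gf(s)=\frac{1}{2m^2}f''(s)+b_\gg(s)f'(s)$, with $b_\gg$ as in \eqref{drift}. A measure $\pi(ds)=p(s)\,ds$ is invariant precisely when $\int_0^\infty Gf\,d\pi=0$ for all test functions $f$; integrating by parts twice, this is equivalent to the stationary Fokker--Planck equation $G^*p=\frac{1}{2m^2}p''-(b_\gg p)'=0$ together with the vanishing of the boundary terms at $s=0$ and $s=+\infty$.

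First I would exhibit $p$ explicitly. Writing $G^*p=J'$ with probability current $J=\frac{1}{2m^2}p'-b_\gg p$, it suffices to find $p>0$ with $J\equiv 0$, i.e.\ $p'/p=2m^2 b_\gg(s)=(d-1)\coth(s)-\gg\tanh(s)$ (the factor $m^2$ cancels, as in the preceding proofs). Integrating gives $\log p(s)=(d-1)\log\sinh(s)-\gg\log\cosh(s)+\text{const}$, so $p(s)\propto \sinh(s)^{d-1}\cosh(s)^{-\gg}$, which by \eqref{mu_d} is exactly a constant multiple of the density of $\cosh(s)^{-\gg}\mu_d(ds)$. Equivalently, one may invoke the scale density $\rho(s)=\frac{\sinh(a)^{d-1}}{\cosh(a)^\gg}\frac{\cosh(s)^\gg}{\sinh(s)^{d-1}}$ already computed in the proof of Theorem~\ref{thm_ex_un_OU}: since the diffusion coefficient is constant, the speed density is proportional to $1/\rho(s)$, which yields the same expression.

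Next I would check integrability. The candidate measure is finite exactly under the hypothesis $\gg>d-1$: near $s=0$ the density behaves like $s^{d-1}$, integrable since $d\ge 2$, while near $+\infty$ it behaves like $e^{(d-1-\gg)s}$, integrable if and only if $\gg>d-1$. Thus $N_{d,\gg}$ in \eqref{eq_norm} is finite and \eqref{eq_inv_meas_S} is a genuine probability measure. The vanishing of the boundary terms in the integration by parts is automatic for $f\in C^2_c((0,+\infty))$, since both $f$ and $f'$ have compact support in the open half-line; hence $\int_0^\infty Gf\,d\pi=0$ for all such $f$.

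The main obstacle is upgrading this formal stationarity --- validity of $\int_0^\infty Gf\,d\pi=0$ on the core $C^2_c((0,+\infty))$ --- to genuine invariance of $\pi$ under the transition semigroup of $S$, given the singular drift at the endpoint $s=0$. I would close this gap through the boundary classification of the diffusion: by Theorem~\ref{thm_ex_un_OU} the solution is recurrent for $\gg\ge d-1$ and a.s.\ remains in $(0,+\infty)$, so neither endpoint is accessible, and combined with the finiteness of the speed measure for $\gg>d-1$ this makes $S$ positive recurrent. A positive recurrent regular one-dimensional diffusion possesses a unique invariant probability measure, which coincides with the normalized speed measure, giving \eqref{eq_inv_meas_S} and completing the proof. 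Alternatively, one could verify invariance directly via the resolvent, or by checking $\int_0^\infty p_t(s_0,\cdot)\,d\pi=\pi$ using the explicit scale and speed data.
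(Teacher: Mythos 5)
Your proof is correct, and its computational core coincides with the paper's: both reduce the claim to the stationary Fokker--Planck equation and solve the resulting first-order ODE. The paper organizes this slightly differently --- it computes the adjoint $L_{d,\gg}^{*}$ with respect to the reference measure $\mu_d$, exploiting the symmetry of the radial Laplace--Beltrami part $L_{d,0}$ with respect to $\mu_d$, and then verifies directly that
\begin{equation*}
	\bigl(\p_s - (d-1)\coth(s) + \gg\tanh(s)\bigr)\sinh(s)^{d-1}\cosh(s)^{-\gg} = 0,
\end{equation*}
whereas you work with the Lebesgue adjoint and obtain the same density $\sinh(s)^{d-1}\cosh(s)^{-\gg}$ by imposing zero probability current $J\equiv 0$; these are the same computation in two different bookkeepings, and your alternative derivation via the scale density $\rho$ from the proof of theorem~\ref{thm_ex_un_OU} (speed density proportional to $1/\rho$ since the diffusion coefficient is constant) is likewise equivalent. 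Where you genuinely go beyond the paper is in the final two steps: the explicit check that $N_{d,\gg}<+\infty$ exactly under the hypothesis $\gg>d-1$ (density behaving like $s^{d-1}$ at the origin and like $e^{(d-1-\gg)s}$ at infinity), and, more substantively, the upgrade from formal stationarity on $C^2_c((0,+\infty))$ to genuine invariance under the transition semigroup, which you obtain from the boundary classification --- both endpoints inaccessible, using the strict positivity and recurrence statements of theorem~\ref{thm_ex_un_OU} --- together with the standard fact that a positive recurrent regular one-dimensional diffusion has the normalized speed measure as its unique invariant probability law. The paper's proof stops at the adjoint computation and leaves this last passage implicit, so your argument is a strictly more complete version of the same approach, and it yields uniqueness of the invariant law as a bonus.
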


\begin{proof}
Consider the generator of $S=(S_t,\,t\in\R_+)$:
\begin{equation*}	\label{gen_S_drift}
	L_{d,\gg} = L_{d,0} - \frac{\gg}{2m^2}\,\tanh(s)\,\p_s,
\end{equation*}
with
\begin{equation*}	\label{gen_S}
\begin{split}
	L_{d,0} &= \frac{1}{2m^2\sinh(s)^{d-1}}\,\p_s \sinh(s)^{d-1}\,\p_s\\
			&= \frac{1}{2m^2}\bigl(\p_s^2 + (d-1) \coth(s)\,\p_s\bigr).
\end{split}	
\end{equation*}
Up to the factor $1/2$, $L_{d,0}$ is the part of the Laplace--Beltrami operator~\eqref{LBd}
involving the $s$--derivatives. Therefore, by the construction of the Laplace--Beltrami 
operator is symmetric with respect to the measure $\mu_d$ on $(\R_+,\cB(\R_+))$. Hence
the adjoint $L_{d,\gg}^{*}$ of $L_{d,\gg}$ with respect to $\mu_d$ acts on smooth functions
as the differential operator given by
\begin{equation*} 
	L_{d,\gg}^{*} 
		= L_{d,0} + \frac{\gg}{2m^2}\,\sinh(s)^{-(d-1)}\, \p_s \tanh(s)\,\p_s \sinh(s)^{d-1}.
\end{equation*}
Rewrite $L_{d,\gg}^{*}$ as follows
\begin{equation*}
	L_{d,\gg}^{*} 
		= \frac{1}{2m^2}\,\sinh(s)^{-(d-1)}\,\p_s\Bigl(\p_s - (d-1)\coth(s) 
					+ \gg \tanh(s)\Bigr)\sinh(s)^{d-1}.
\end{equation*}
An elementary computation shows that
\begin{equation*}
	\Bigl(\p_s - (d-1)\coth(s) + \gg \tanh(s)\Bigr)\sinh(s)^{d-1}\cosh(s)^{-\gg}=0,
\end{equation*}
finishing the proof.
\end{proof}

As long as $\gg>d-1$, we may equivalently consider the function 
\begin{equation*}
	s\mapsto N_{d,\gg}^{-1}\cosh(s)^{-\gg}\sinh(s)^{d-1}
\end{equation*} 
as the \emph{Lebesgue} density of the invariant measure
of the stochastic process $S=(S_t,\,t\in\R_+)$. This is in particular useful, when we
want to compare the theoretical result of lemma~\ref{lem_inv_meas_S} with simulations
of the process. Figure~\ref{figI} shows some of the results of simulation experiments
we carried out, and which are described in more technical detail in appendix~\ref{app_Sim}. 
In each of these experiments we have put $m^2=1$, and simulated $5\times 10^3$ (numerical 
approximations of) the paths of the process $S$ for a relatively long time (see 
appendix~\ref{app_Sim}), and plotted the resulting histograms of the final positions 
(in blue) versus the Lebesgue density (in red) derived above. The plots show a very 
reasonable agreement of the theoretical density with the histograms, as could be expected.

\begin{figure} \centering
\subfloat[$d=2$, $\gg=4$]
	{\includegraphics[scale=.3]{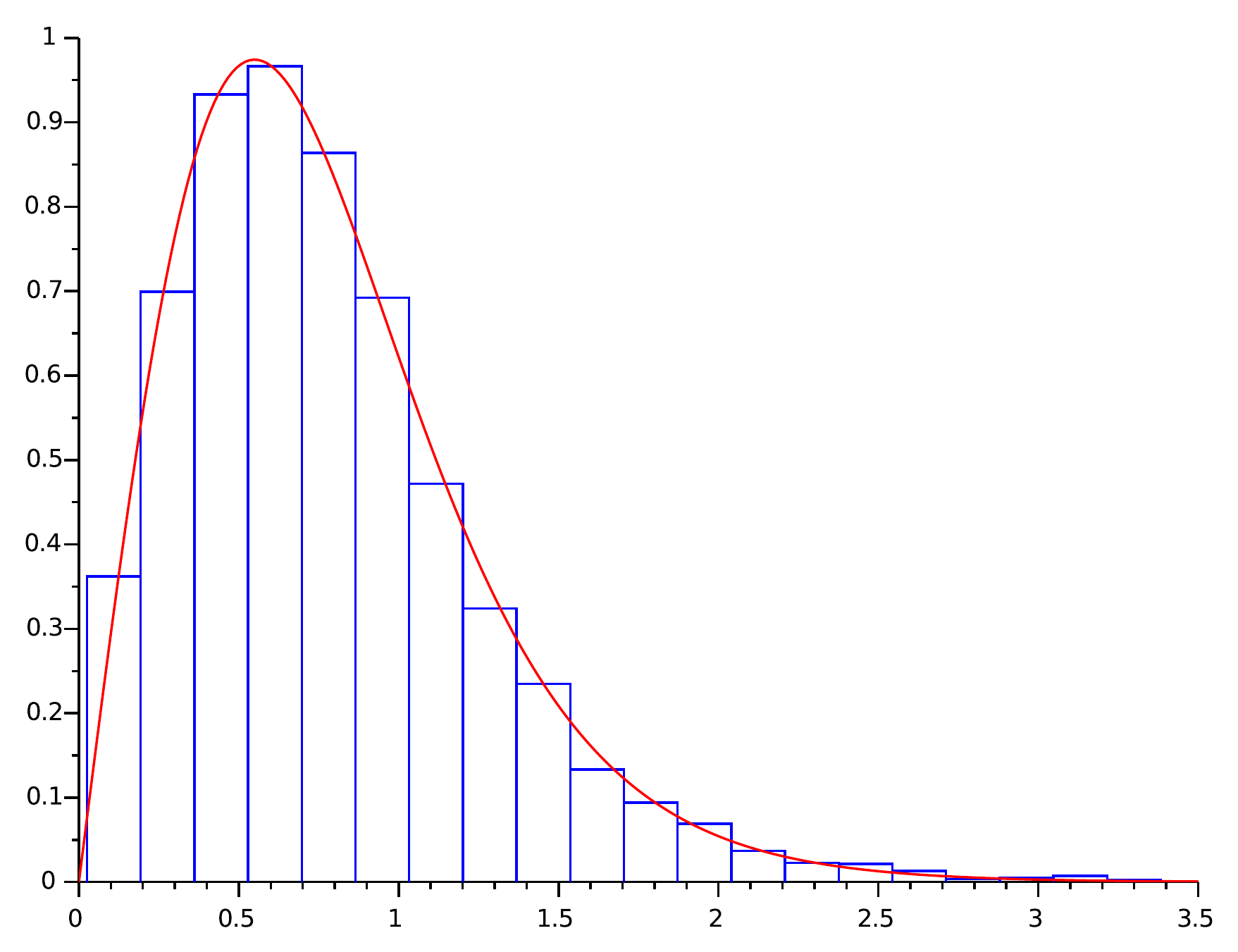}\label{figIa}}
\hspace{4em}
\subfloat[$d=3$, $\gg=4$]
	{\includegraphics[scale=.3]{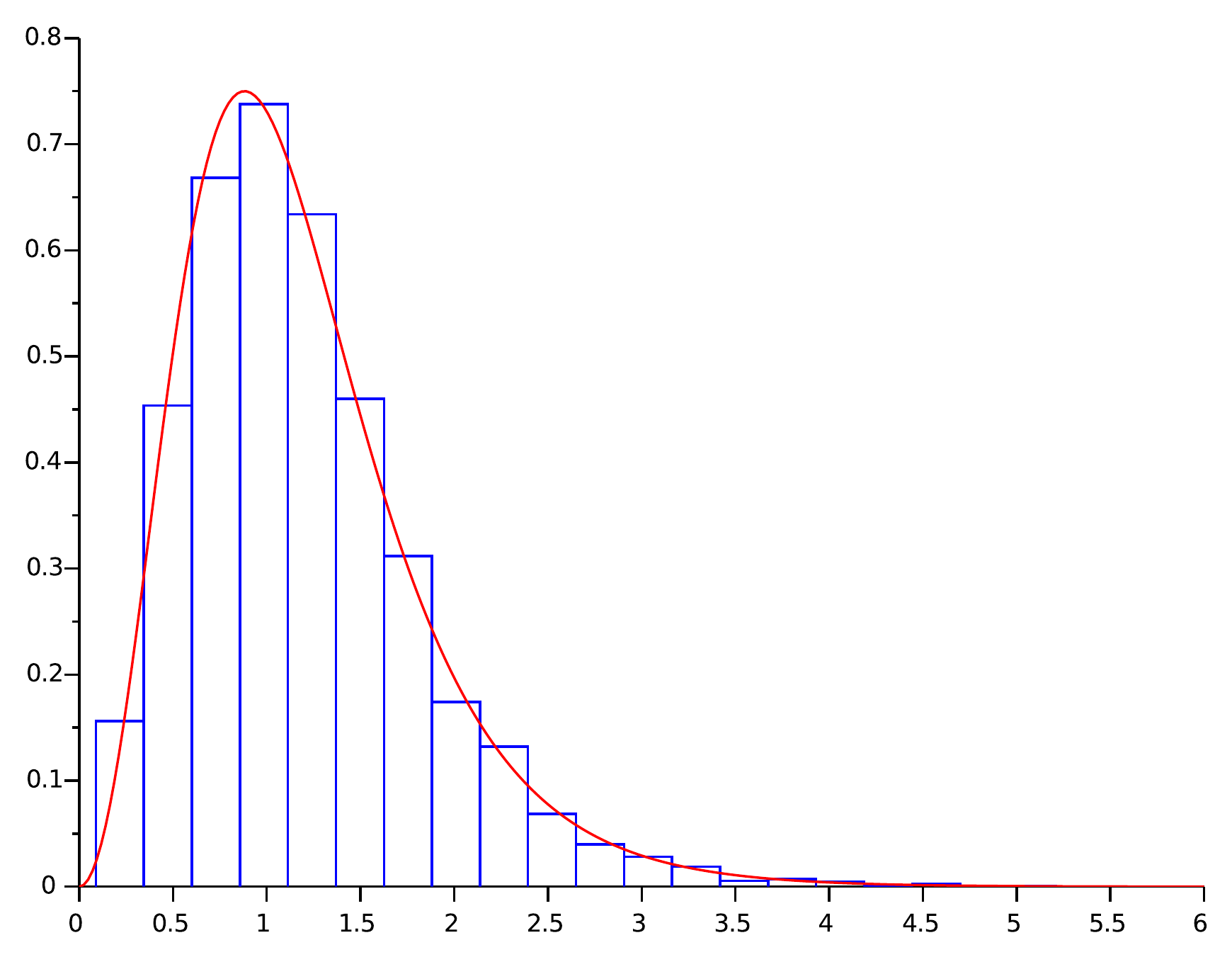}\label{figIb}}
\newline
\subfloat[$d=3$, $\gg=10$]
	{\includegraphics[scale=.3]{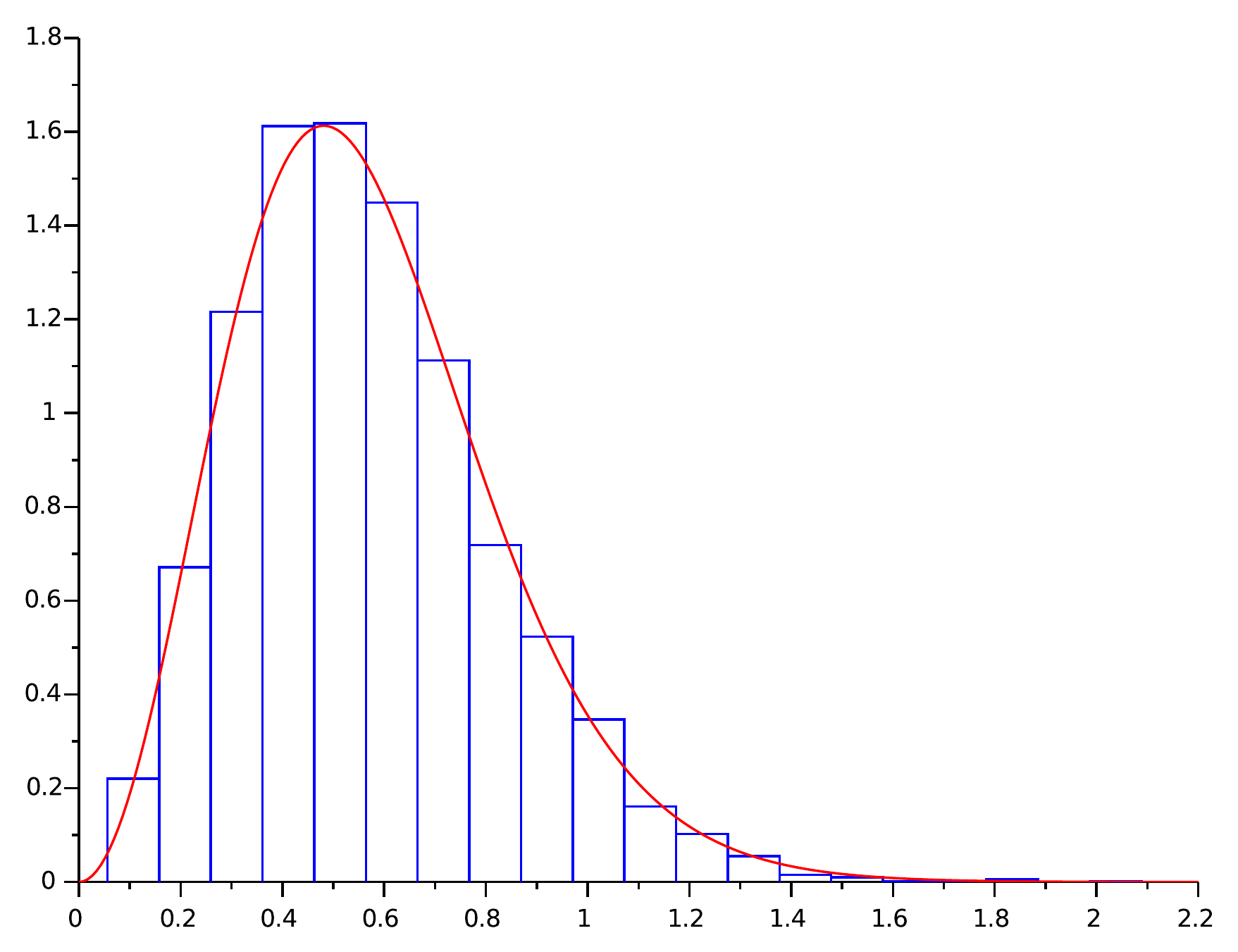}\label{figIc}}
\hspace{4em}
\subfloat[$d=4$, $\gg=7$]
	{\includegraphics[scale=.3]{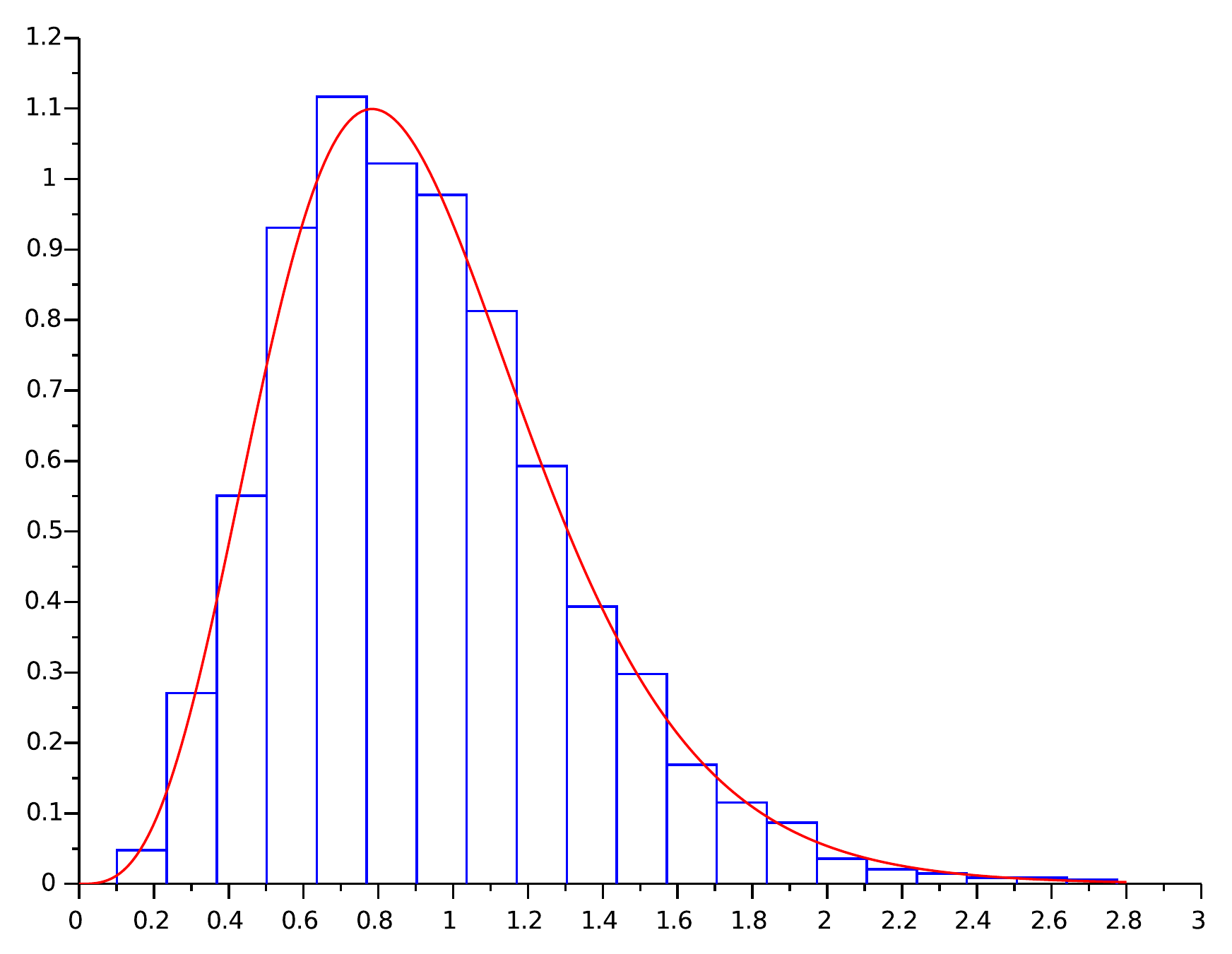}\label{figId}}
\caption{Histograms of Simulations of $S$ at Large Times}\label{figI}
\end{figure}

\begin{theorem}	\label{thm_inv_meas}
For every $\gg>d-1$, the stochastic process given by~\eqref{ROU_hyp} admits
the invariant measure given by
\begin{equation}	\label{eq_inv_meas_ROU}
	\frac{\gG(d/2)}{2\pi^{d/2} N_{d,\gg}}\,\cosh(s)^{-\gg}\,
		d\text{{\rm vol}}_d(s,\theta),\qquad s\in \R_+,\,\theta\in [0,\pi]^{d-2}\times[0,2\pi).
\end{equation}
\end{theorem}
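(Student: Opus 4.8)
The plan is to identify the generator of the skew-product process~\eqref{ROU_hyp} and then to recognise the measure~\eqref{eq_inv_meas_ROU} as $\cosh(s)^{-\gg}$ times the Riemannian volume, with respect to which that generator is manifestly symmetric. By the same skew-product computation as in section~\ref{sect_BM} — where the time change $\tau$ of~\eqref{tau} contributes the rate $(m\sinh(s))^{-2}$ in front of the spherical part — the process~\eqref{ROU_hyp} is a diffusion on $\cV_m^d$ whose generator is
$$\mathcal{L} = L_{d,\gg} + \frac{1}{2m^2\sinh(s)^2}\,\gL_{S^{d-1}} = \tfrac12\,\gD_d - \frac{\gg}{2m^2}\,\tanh(s)\,\p_s,$$
where $L_{d,\gg}$ is the one–dimensional generator from the proof of lemma~\ref{lem_inv_meas_S} and the second equality uses $\tfrac12\gD_d = L_{d,0}+(2m^2\sinh(s)^2)^{-1}\gL_{S^{d-1}}$, read off from~\eqref{LBd}.

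First I would observe that the additional drift is exactly the logarithmic gradient of the weight $w(s)=\cosh(s)^{-\gg}$: since $\p_s\log w=-\gg\tanh(s)$ and the $(s,s)$ entry of the inverse metric is $g_d^{ss}=1/m^2$, one has $\mathcal{L}=\tfrac12\gD_{d,w}$, the weighted Laplacian on $\cV_m^d$ associated with $w$. Being a weighted Laplacian, $\mathcal{L}$ is symmetric with respect to $w\,d\text{vol}_d=\cosh(s)^{-\gg}\,d\text{vol}_d$, which up to the normalising constant is precisely~\eqref{eq_inv_meas_ROU}; here $\gG(d/2)/2\pi^{d/2}$ is the reciprocal surface area of $S^{d-1}$ and renders the $\theta$–marginal the uniform probability measure on the sphere. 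Since $\mathcal{L}1=0$, pairing the symmetry identity with the constant function gives $\int\mathcal{L}f\,d\nu=\langle f,\mathcal{L}1\rangle_\nu=0$ for $f$ in the relevant domain, which is the desired invariance.

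As a cross-check consistent with the skew-product origin, I would also verify invariance directly on product functions $f(s,\theta)=g(s)h(\theta)$: by~\eqref{RVol} the measure~\eqref{eq_inv_meas_ROU} factorises into the $s$–measure $\cosh(s)^{-\gg}\mu_d(ds)$ and the uniform surface measure $d\gs_{S^{d-1}}$, while $\mathcal{L}f$ splits as $h\,L_{d,\gg}g + g\,(2m^2\sinh(s)^2)^{-1}\gL_{S^{d-1}}h$. The $s$–integral of $L_{d,\gg}g$ against $\cosh(s)^{-\gg}\mu_d$ vanishes by lemma~\ref{lem_inv_meas_S}, and the spherical integral $\int_{S^{d-1}}\gL_{S^{d-1}}h\,d\gs_{S^{d-1}}$ vanishes by the divergence theorem on the closed sphere; density of finite sums of products then yields the claim.

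The main obstacle is the justification of the integration by parts underlying the symmetry, that is, the vanishing of the boundary terms. The weight $w=(p_0/m)^{-\gg}$ is smooth on the whole manifold $\cV_m^d$, so the apex — where the hyperbolic chart degenerates — is an ordinary interior point; the only end is $s\to\infty$, where the flux carries the factor $\cosh(s)^{-\gg}\sinh(s)^{d-1}$. This decays, and more importantly is $\mu_d$–integrable, exactly when $\gg>d-1$, the hypothesis of the theorem and the same threshold that keeps $N_{d,\gg}$ finite in lemma~\ref{lem_inv_meas_S}; it is what makes $\nu$ a finite, hence (after normalisation) probability, invariant measure. I would therefore run the symmetry argument on smooth functions of compact support, for which all boundary terms vanish identically, and promote it to the stated measure using the $\gg>d-1$ integrability. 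In the product cross-check one must additionally note that for $d=2$ the coefficient $(2m^2\sinh(s)^2)^{-1}$ fails to be $\cosh(s)^{-\gg}\mu_d$–integrable at the apex, so there the weighted-Laplacian formulation, which never isolates this coefficient, is the cleaner route.
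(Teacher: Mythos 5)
Your argument is correct, and it actually contains the paper's proof as a special case. The paper's own proof is a one-liner: by lemma~\ref{lem_inv_meas_S} the measure $N_{d,\gg}^{-1}\cosh(s)^{-\gg}\mu_d(ds)$ is invariant for the radial process $S$, the uniform law $\frac{\gG(d/2)}{2\pi^{d/2}}\,d\gs_{S^{d-1}}$ is invariant for the spherical Wiener process $\Theta$, and by~\eqref{RVol} the product of the two is exactly~\eqref{eq_inv_meas_ROU}. What makes that ``follows directly'' legitimate despite the random time change coupling the two factors is precisely your product-function computation: the skew-product generator is $L_{d,\gg}+(2m^2\sinh(s)^2)^{-1}\gL_{S^{d-1}}$, and on $f(s,\theta)=g(s)h(\theta)$ the spherical term integrates to zero pointwise in $s$ because $\int_{S^{d-1}}\gL_{S^{d-1}}h\,d\gs_{S^{d-1}}=0$ (so the $s$-dependent rate never matters), while the radial term vanishes by the lemma. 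Thus your ``cross-check'' is in substance the paper's proof, made explicit. Your primary route --- identifying $\mathcal{L}=\tfrac12\gD_d-\frac{\gg}{2m^2}\tanh(s)\,\p_s$ as half the weighted Laplacian for the weight $w=(p_0/m)^{-\gg}$, which is smooth on all of $\cV_m^d$ --- is a genuinely different and somewhat stronger packaging: it yields symmetry (hence reversibility, not merely invariance) of the process with respect to $w\,d\text{vol}_d$ in one stroke, it disposes of the apex without any discussion of the degenerating hyperbolic chart, and your boundary analysis at $s\to\infty$ (compactly supported test functions plus the $\gg>d-1$ integrability of the flux factor $\cosh(s)^{-\gg}\sinh(s)^{d-1}$) supplies rigor that the paper leaves implicit, since the proof of lemma~\ref{lem_inv_meas_S} is itself only a formal adjoint computation. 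Your side remark on $d=2$ is also correct: near the apex $\cosh(s)^{-\gg}\sinh(s)^{d-1}\sinh(s)^{-2}\sim s^{d-3}$ is non-integrable for $d=2$, though this is harmless in the iterated order of integration you use, because the spherical integral vanishes identically. One cosmetic point: with the paper's volume element~\eqref{RVol}, the constant in~\eqref{eq_inv_meas_ROU} normalizes the measure to total mass $m$ rather than $1$; this is immaterial for invariance and disappears under the convention $m=1$ adopted later in that section.
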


\begin{proof}
This follows directly from lemma~\ref{lem_inv_meas_S}, together with the observation
that the invariant measure of the Wiener process on the unit sphere $S^{d-1}$ is the 
uniform law on $S^{d-1}$:
\begin{equation*}
	\frac{\gG(d/2)}{2 \pi^{d/2}}\,d\gs_{S^{d-1}}(\theta),
\end{equation*}
where the coefficient in front of the surface element $d\gs_{S^{d-1}}$ is the inverse 
of the total area of $S^{d-1}$.
\end{proof}

In a slightly informal manner the invariant measure~\eqref{eq_inv_meas_ROU} of
the relativistic Ornstein--Uhlenbeck momentum process can be written in
cartesian coordinates $\mb{p} = (p_0,p)\in\R^{1+d}$ as
\begin{equation}	\label{eq_inv_meas_cart}
\begin{split}
	\text{const.}\, \frac{m^\gg}{p_0^\gg}\,&\gd\bigl(p_0^2-|p|^2-m^2\bigr)\,
					1_{\R_+}(p_0)\,d^{1+d}\mb{p}\\
		&=\text{const.}\, \frac{m^\gg}{(m^2+|p|^2)^{\gg/2}}\,
					\gd\bigl(p_0^2-|p|^2-m^2\bigr)\,1_{\R_+}(p_0)\,d^{1+d}\mb{p},
\end{split}
\end{equation}
where the constant is the same as in~\eqref{eq_inv_meas_ROU}, and $\gd$ is the Dirac
delta function.

For simplicity let us put $m^2=1$ in the sequel. From lemma~\ref{lem_inv_meas_S} we 
directly get the following

\begin{corollary}	\label{cor_inv_meas_P0}
For $d\ge 2$ and $\gg>d-1$ the energy process $P_0$ has an invariant density $\vp_{P_0}$
with respect to Lebesgue measure on $(\R,\cB(\R))$ given by
\begin{equation}	\label{inv_meas_P0}
	\vp_{P_0}(p_0) 
		= \frac{1}{N_{d,\gg}}\,p_0^{-\gg}(p_0^2-1)^{(d-2)/2}\,
									1_{[1,+\infty)}(p_0),\qquad p_0\in\R.	
\end{equation}
The \emph{Ornstein--Uhenbeck speed process} $|V|=\sqrt{\mathstrut V_1^2 +\dotsb V_d^2}$ 
has an invariant density $\vp_{|V|}$ with respect to Lebesgue measure on $([0,1],\cB([0,1]))$ 
given by
\begin{equation}	\label{inv_meas_V}
	\vp_{|V|}(v) = \frac{1}{N_{d,\gg}} v^{d-1} (1-v^2)^{(\gg-(d+1))/2},\qquad v\in [0,1].
\end{equation}
\end{corollary}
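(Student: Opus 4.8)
The plan is to observe that both quantities in question are deterministic, strictly monotone functions of the single radial process $S$ alone, so that their invariant laws are merely the push-forwards of the invariant law of $S$ furnished by lemma~\ref{lem_inv_meas_S}, and I would read off the stated densities by a one-dimensional change of variables. Indeed, putting $m=1$ in~\eqref{ROUM} and~\eqref{ROUV} gives $P_0(t)=\cosh(S_t)$, and, since $\go(\Theta_{\tau(t)})$ is a unit vector and $S_t>0$,
\begin{equation*}
	|V(t)| = \bigl|\tanh(S_t)\,\go(\Theta_{\tau(t)})\bigr| = \tanh(S_t).
\end{equation*}
Neither expression depends on the spherical component, so the invariant distributions of $P_0$ and of $|V|$ are exactly the images, under the maps $s\mapsto\cosh(s)$ and $s\mapsto\tanh(s)$ respectively, of the invariant Lebesgue density
\begin{equation*}
	f_S(s) = \frac{1}{N_{d,\gg}}\,\cosh(s)^{-\gg}\sinh(s)^{d-1},\qquad s>0,
\end{equation*}
of $S$ recorded after lemma~\ref{lem_inv_meas_S}.

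For $P_0$ I would use the substitution $p_0=\cosh(s)$, a bijection of $(0,+\infty)$ onto $(1,+\infty)$ with $dp_0=\sinh(s)\,ds$ and $\sinh(s)=(p_0^2-1)^{1/2}$. Transporting $f_S$ then yields
\begin{equation*}
	\vp_{P_0}(p_0) = f_S(s)\,\frac{ds}{dp_0}
		= \frac{1}{N_{d,\gg}}\,p_0^{-\gg}\,(p_0^2-1)^{(d-1)/2}\,(p_0^2-1)^{-1/2},
\end{equation*}
which is~\eqref{inv_meas_P0} after combining the two powers of $(p_0^2-1)$. For $|V|$ I would substitute $v=\tanh(s)$, a bijection of $(0,+\infty)$ onto $(0,1)$ with $dv=(1-v^2)\,ds$, and insert $\cosh(s)=(1-v^2)^{-1/2}$, $\sinh(s)=v\,(1-v^2)^{-1/2}$. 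A short bookkeeping of the exponent of $(1-v^2)$, namely $\gg/2-(d-1)/2-1=(\gg-(d+1))/2$, then delivers~\eqref{inv_meas_V}.

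There is no real obstacle here; the only points deserving a word of care are that $|V|$ genuinely equals $\tanh(S)$, so that the angular dynamics drop out and a truly one-dimensional reduction is legitimate, and that the normalizing constant comes out as exactly $N_{d,\gg}$ in both cases. The latter is not a coincidence to be checked by a fresh integration: since each change of variables is a bijection, it preserves total mass, so the normalization of $f_S$ is carried over verbatim to each push-forward. One should finally note that the condition $\gg>d-1$ inherited from lemma~\ref{lem_inv_meas_S} guarantees $N_{d,\gg}<+\infty$ and hence that both expressions are bona fide probability densities.
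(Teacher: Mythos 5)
Your proof is correct and is exactly the argument the paper intends: the corollary is stated as following directly from lemma~\ref{lem_inv_meas_S}, and the (unwritten) derivation is precisely your push-forward of the invariant density $N_{d,\gg}^{-1}\cosh(s)^{-\gg}\sinh(s)^{d-1}\,ds$ under the bijections $s\mapsto\cosh(s)$ and $s\mapsto\tanh(s)$, with the same exponent bookkeeping. Your two points of care --- that $|V|=\tanh(S)$ so the angular dynamics drop out, and that mass preservation under a bijective change of variables explains why the normalizing constant remains $N_{d,\gg}$ --- are exactly the details the paper leaves implicit.
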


\begin{remark}	\label{rem_V}
As formula~\eqref{inv_meas_V} shows, the parameter $\gg$ must actually be chosen to be
strictly larger than $d+1$ in order that the particle undergoing this process cannot
attain the speed of light with strictly positive probability (cf.\ also figure~\ref{figIIIa}).
\end{remark}

Figures~\ref{figII} and~\ref{figIII} show the long term histograms of the final values 
of $5\times 10^3$ simulated paths of $P_0$, $|V|$ respectively, in comparison with the 
marginal densities~\eqref{inv_meas_P0}, \eqref{inv_meas_V} respectively. 
\begin{figure} \centering
\subfloat[$d=3$, $\gg=4$]
	{\includegraphics[scale=.3]{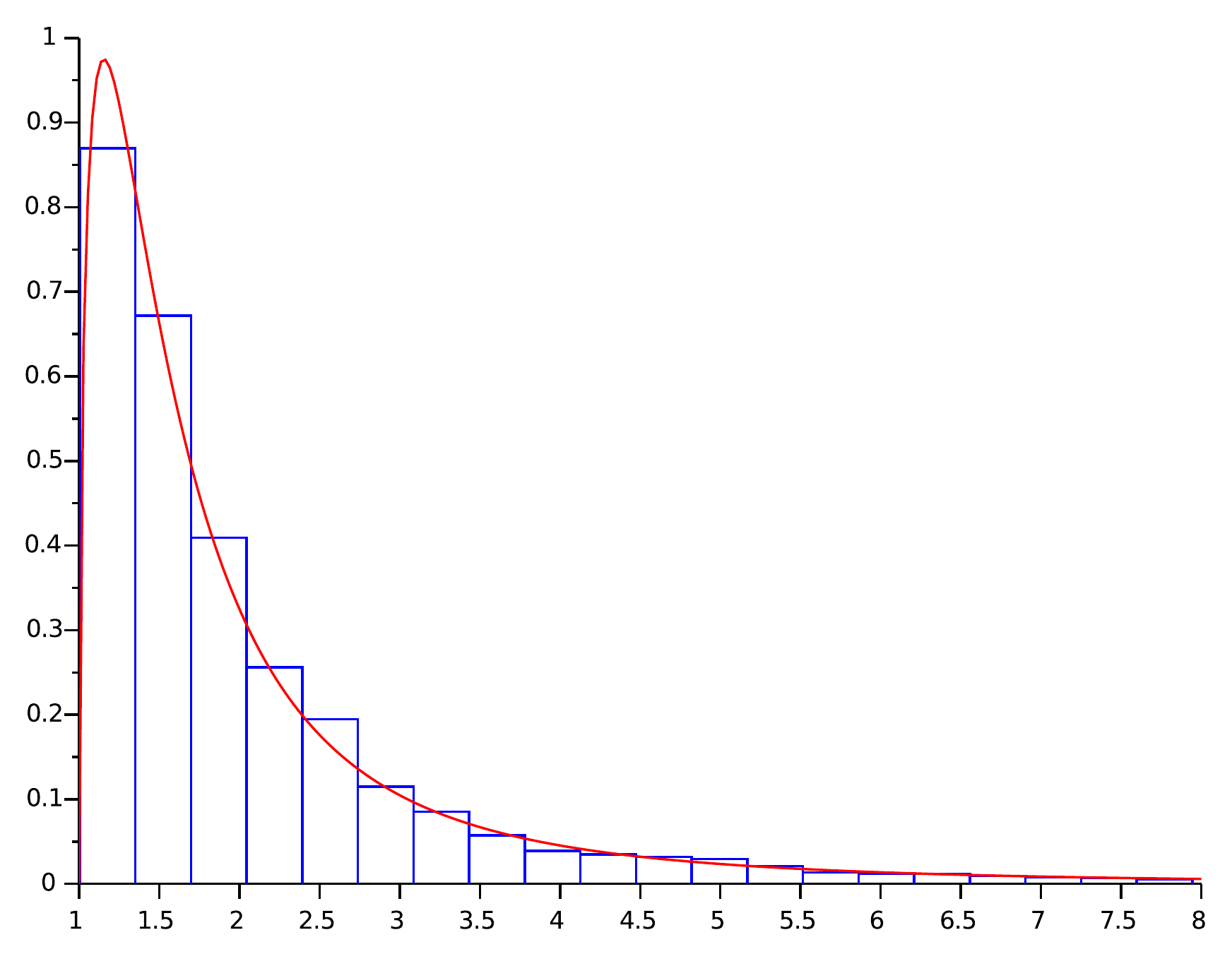}\label{figIIa}}
\hspace{4em}
\subfloat[$d=3$, $\gg=6$]
	{\includegraphics[scale=.3]{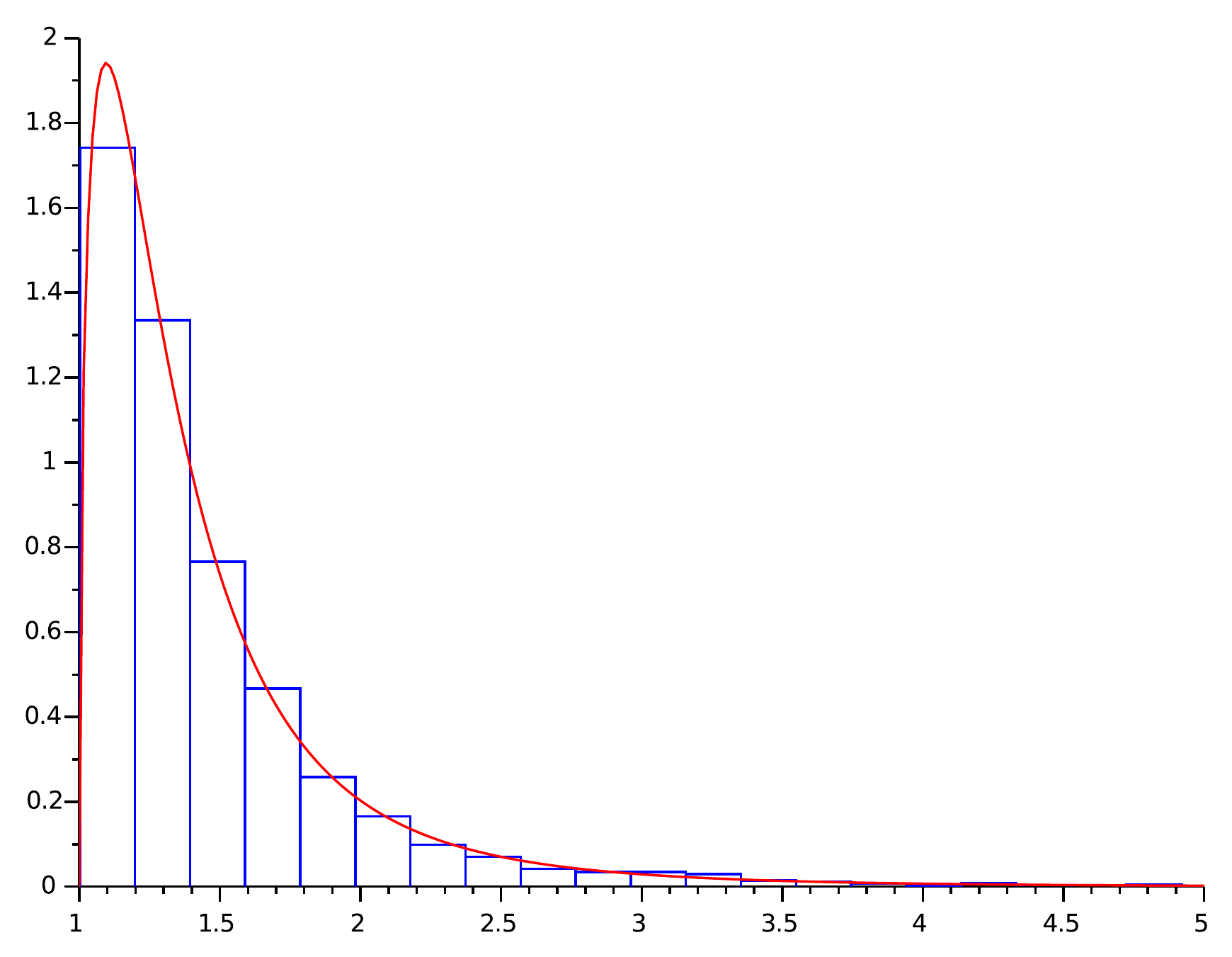}\label{figIIb}}
\newline
\subfloat[$d=3$, $\gg=8$]
	{\includegraphics[scale=.3]{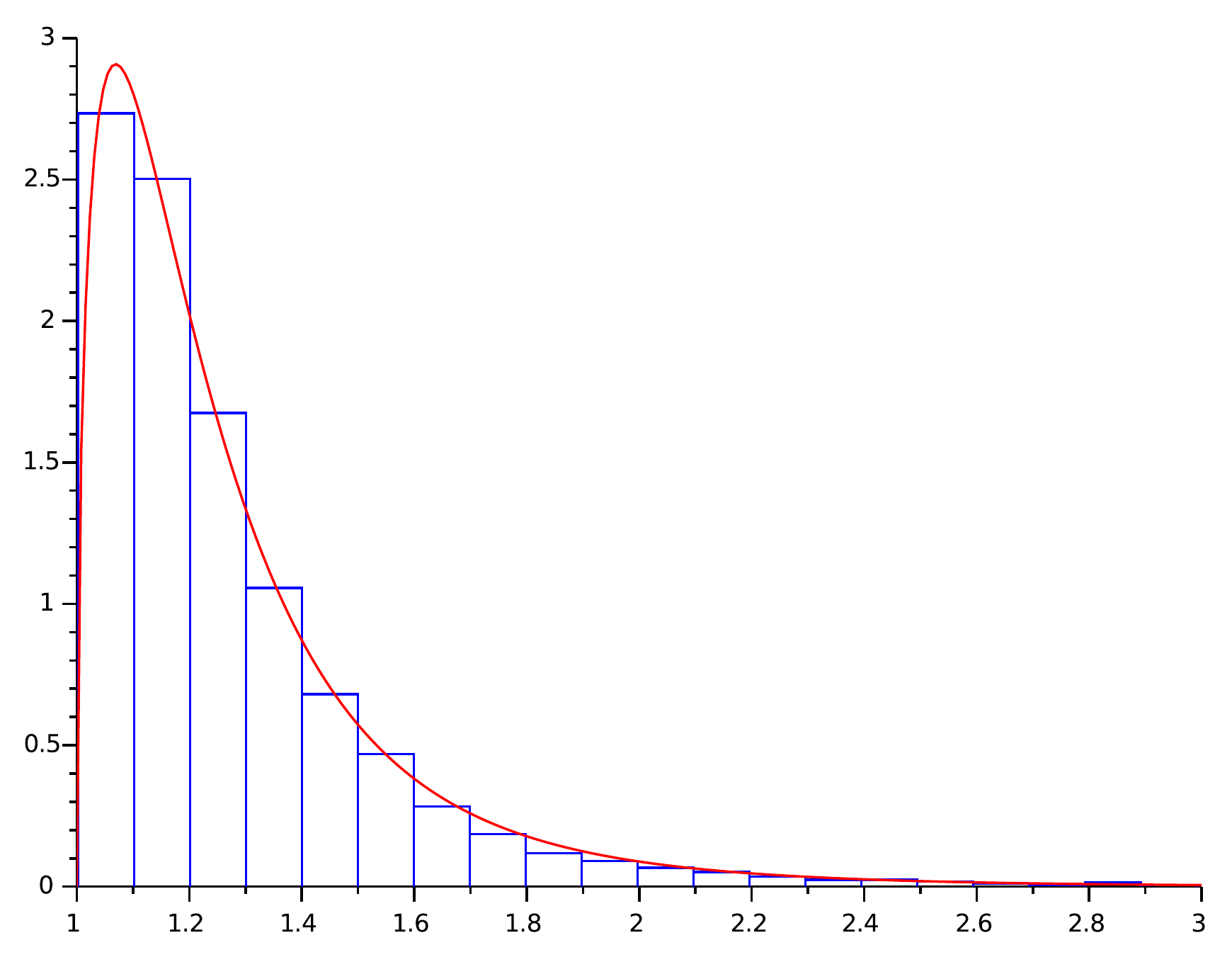}\label{figIIc}}
\hspace{4em}
\subfloat[$d=3$, $\gg=10$]
	{\includegraphics[scale=.3]{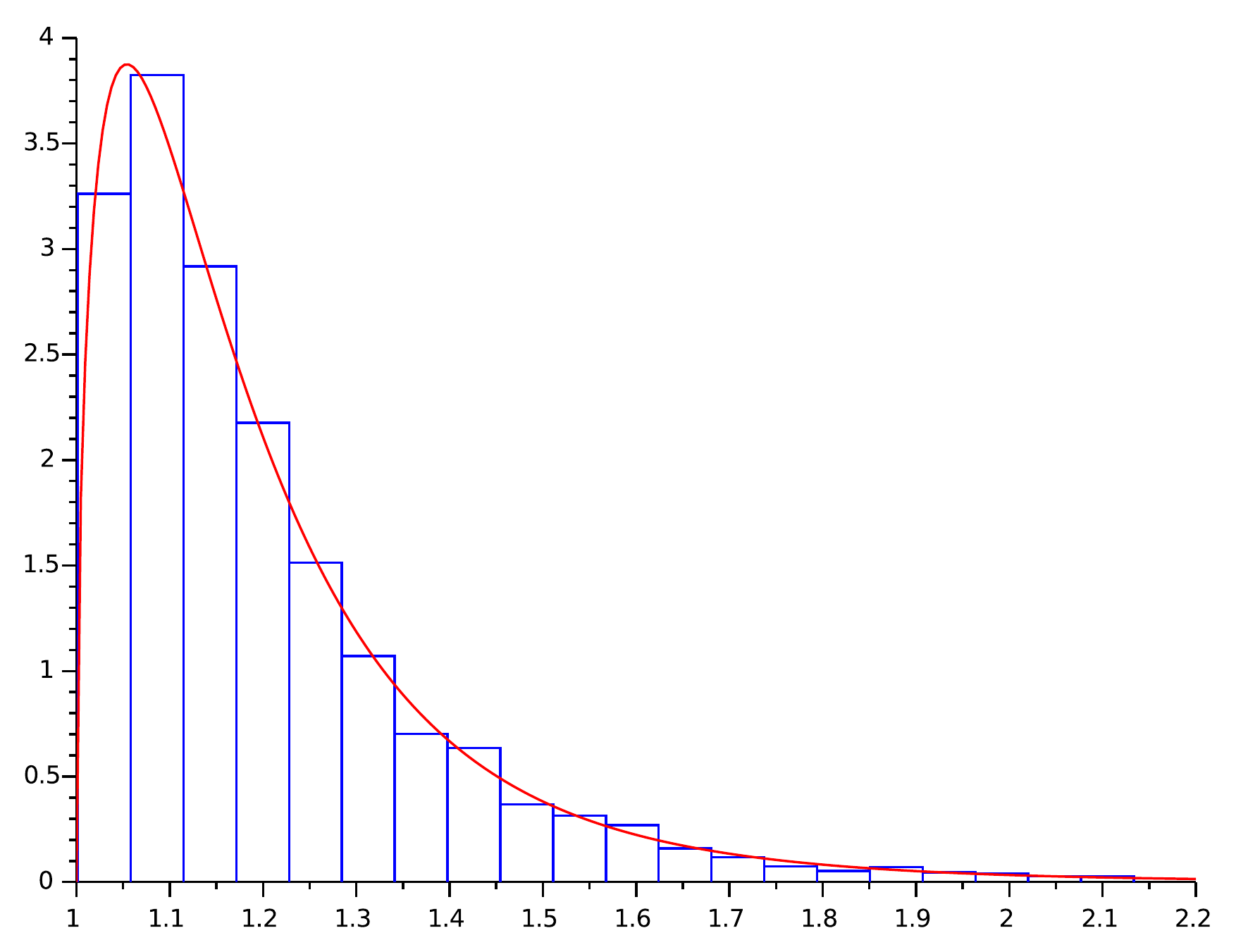}\label{figIId}}
\caption{Histograms of Simulations of $P_0$  at Large Times}\label{figII}
\end{figure}

\begin{figure} \centering
\subfloat[$d=3$, $\gg=4$]
	{\includegraphics[scale=.3]{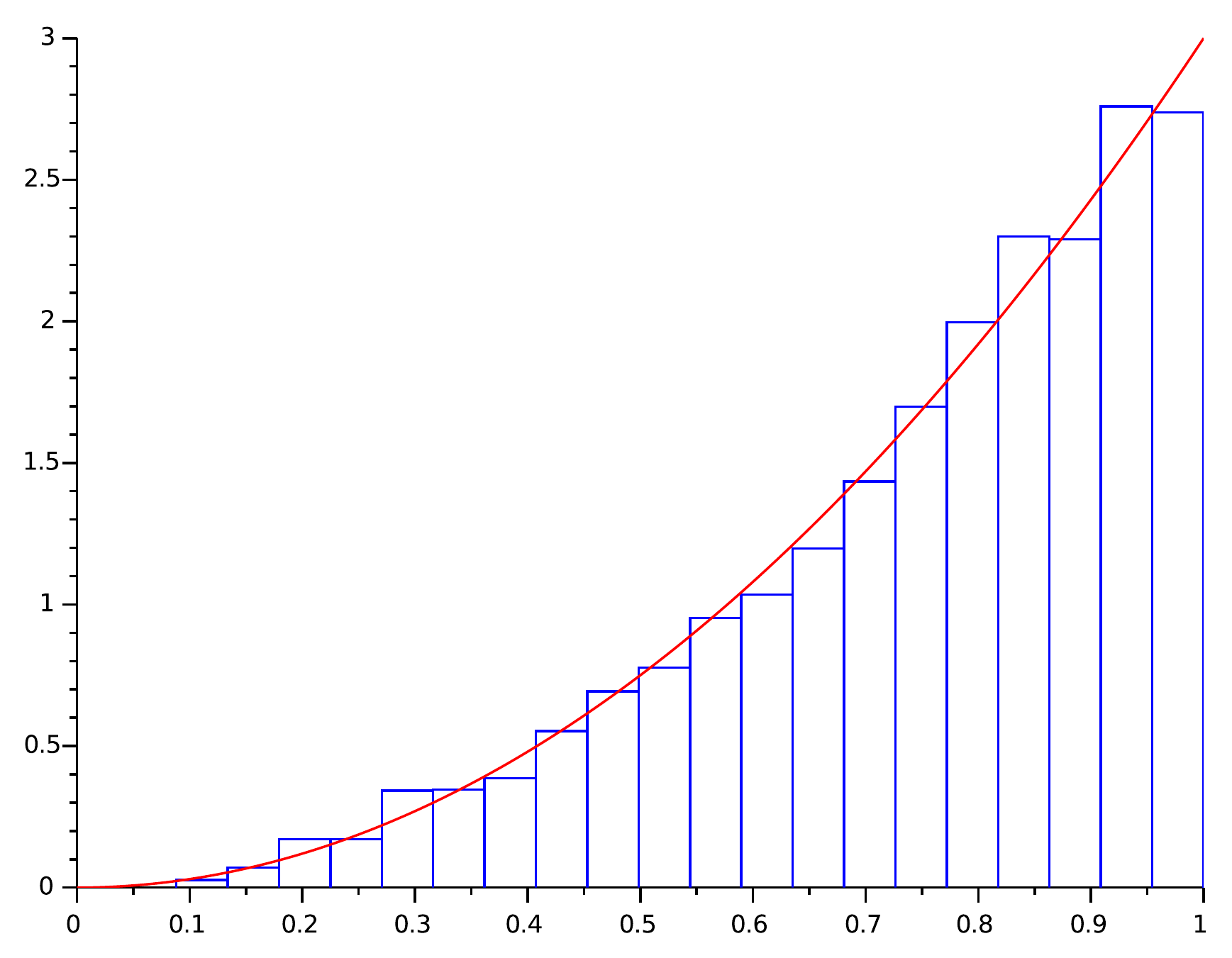}\label{figIIIa}}
\hspace{4em}
\subfloat[$d=3$, $\gg=6$]
	{\includegraphics[scale=.3]{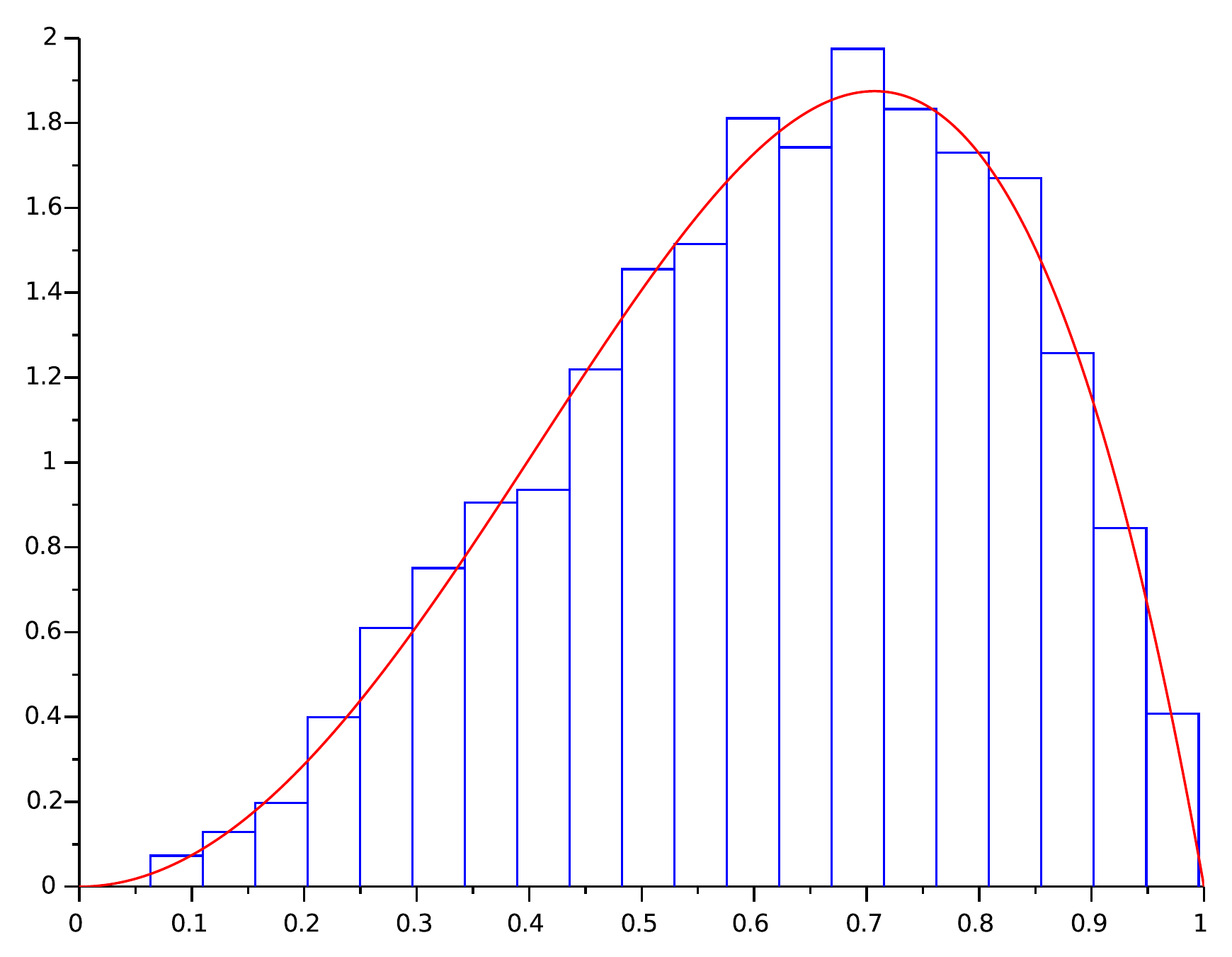}\label{figIIIb}}
\newline
\subfloat[$d=3$, $\gg=8$]
	{\includegraphics[scale=.3]{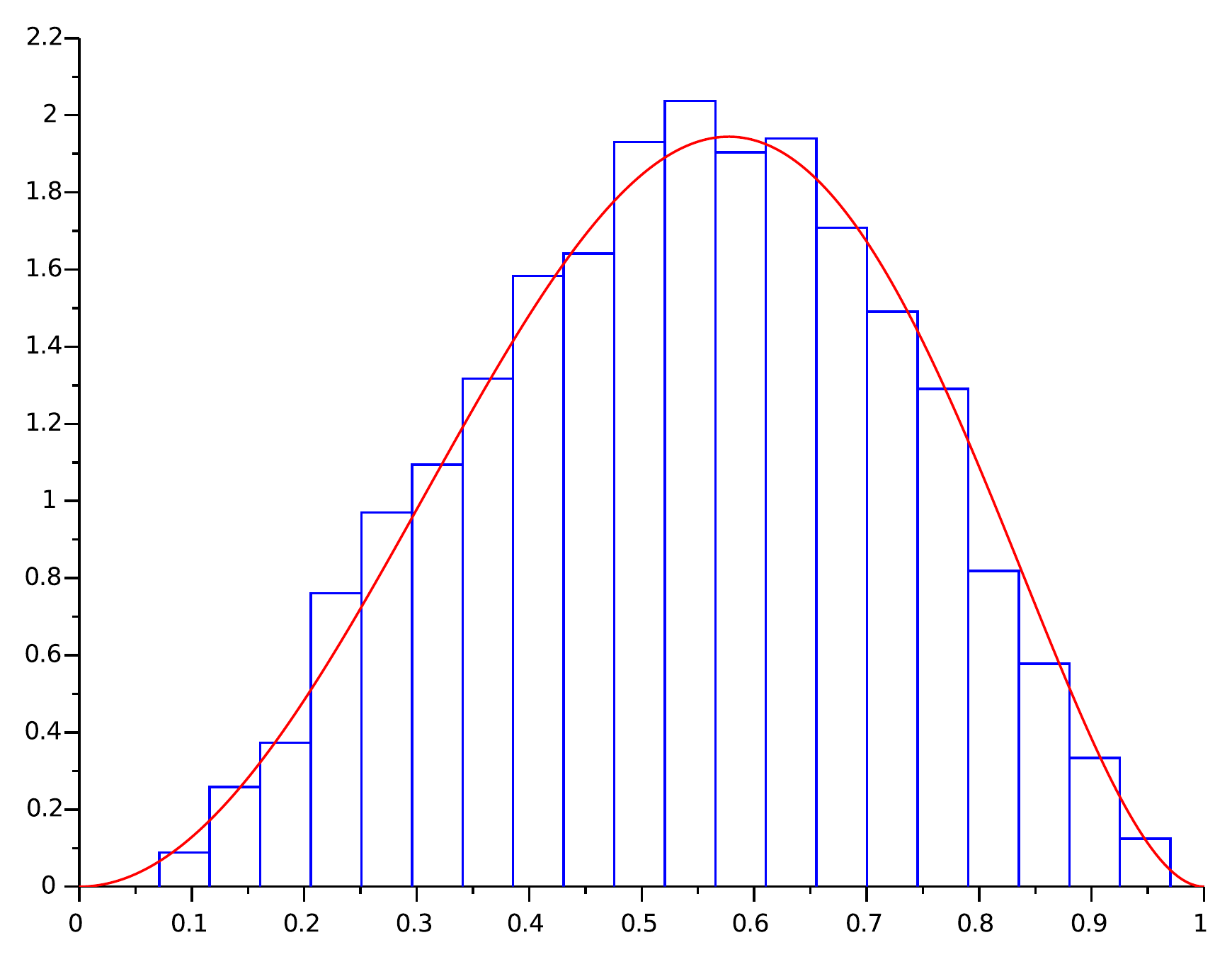}\label{figIIIc}}
\hspace{4em}
\subfloat[$d=3$, $\gg=10$]
	{\includegraphics[scale=.3]{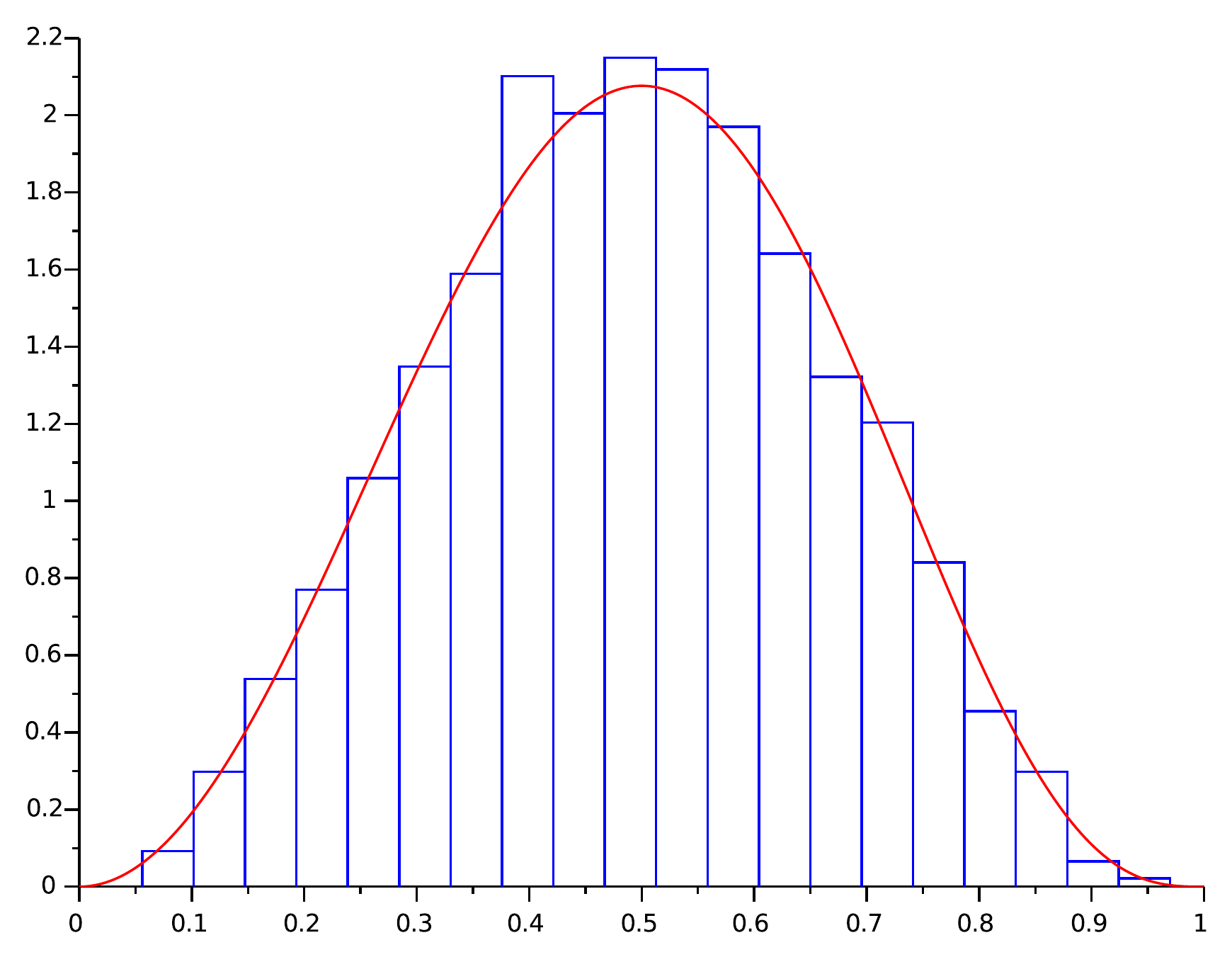}\label{figIIId}}
\caption{Histograms of Simulations of $|V|$  at Large Times}\label{figIII}
\end{figure}

For the remainder of this section we assume in addition that $d=3$, i.e., that we are
in the physical Minkowski space. Then it is straightforward to compute also the 
marginal invariant densities of the momentum processes $P_i$, $i=1$, $2$, $3$, explicitly:

\begin{corollary}	\label{cor_inv_meas_Pi}
For $d=3$ and $\gg>2$ the cartesian components $P_i$, $i=1$, $2$, $3$, of the
Ornstein--Uhlenbeck momentum process have marginal invariant densities $\vp_{P_i}$
with respect to Lebesgue measure on $(\R,\cB(\R))$ given by
\begin{equation}	\label{eq_mom_dens}
	\vp_{P_i}(p) = \frac{1}{n_\gg}\,(1+p^2)^{-(\gg-1)/2},\qquad i=1,2, 3,\,p\in\R,
\end{equation}
where $n_\gg$ is the normalization constant.
\end{corollary}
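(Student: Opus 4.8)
The plan is to obtain the full joint invariant density of the spatial momentum $(P_1,P_2,P_3)$ on $\R^3$ and then integrate out two of the three components. By the rotational symmetry of the invariant measure in Theorem~\ref{thm_inv_meas} --- the factor $\cosh(s)^{-\gg}$ depends only on $s$, and the angular part is the uniform law on $S^2$ --- the three marginals coincide, so it suffices to compute the marginal of a single component, say $P_3$.

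First I would produce the joint density of $(P_1,P_2,P_3)$. The cleanest route is to push forward the genuine measure of Theorem~\ref{thm_inv_meas} (with $m=1$, $d=3$) under $(s,\theta)\mapsto p=\sinh(s)\,\go(\theta)$. Writing $r=|p|=\sinh(s)$ one has $\cosh(s)=\sqrt{1+r^2}$ and $ds=dr/\sqrt{1+r^2}$, so the radial--angular factor $\cosh(s)^{-\gg}\sinh(s)^2\,ds\,d\gs_{S^2}$ transforms into $(1+r^2)^{-(\gg+1)/2}r^2\,dr\,d\gs_{S^2}$, which is exactly $(1+|p|^2)^{-(\gg+1)/2}\,d^3p$. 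Thus the joint law of $(P_1,P_2,P_3)$ has Lebesgue density proportional to $(1+|p|^2)^{-(\gg+1)/2}$. The same density arises formally from~\eqref{eq_inv_meas_cart} by integrating out $p_0$ against $\gd(p_0^2-|p|^2-1)1_{\R_+}(p_0)$: the unique positive root is $p_0=\sqrt{1+|p|^2}$ and $\p_{p_0}(p_0^2-|p|^2-1)=2p_0$, producing the Jacobian factor $1/(2\sqrt{1+|p|^2})$, so that
\begin{equation*}
	(1+|p|^2)^{-\gg/2}\,\frac{1}{2\sqrt{1+|p|^2}}=\tfrac12\,(1+|p|^2)^{-(\gg+1)/2}.
\end{equation*}

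Next I would marginalize to one coordinate. Setting $P_3=z$ and passing to polar coordinates $(\rho,\phi)$ in the $(p_1,p_2)$--plane, the marginal of $P_3$ is proportional to
\begin{equation*}
	\int_0^\infty \bigl(1+z^2+\rho^2\bigr)^{-(\gg+1)/2}\,\rho\,d\rho.
\end{equation*}
Substituting $u=\rho^2$ and putting $a^2=1+z^2$, this equals $\tfrac12\int_0^\infty(a^2+u)^{-(\gg+1)/2}\,du$, which converges because $(\gg+1)/2>1$, i.e.\ $\gg>1$ (automatic under the standing hypothesis $\gg>2$), and evaluates to $(\gg-1)^{-1}(a^2)^{(1-\gg)/2}=(\gg-1)^{-1}(1+z^2)^{-(\gg-1)/2}$. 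Absorbing every constant into the normalizer $n_\gg$ yields~\eqref{eq_mom_dens}.

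The only delicate point is justifying the passage from the formal delta-function expression~\eqref{eq_inv_meas_cart} to an honest probability density on $\R^3$; I would therefore base the argument on the change-of-variables computation from the measure of Theorem~\ref{thm_inv_meas}, which never invokes $\gd$, and treat the delta-function calculation merely as a mnemonic consistency check. Everything else is an elementary, convergent one-dimensional integral, so I do not anticipate any substantive obstacle once the joint spatial density $(1+|p|^2)^{-(\gg+1)/2}$ is in hand.
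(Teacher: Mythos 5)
Your proposal is correct and is precisely the computation the paper has in mind: the paper gives no written proof of Corollary~\ref{cor_inv_meas_Pi}, declaring it ``straightforward,'' and the intended route is exactly yours --- push the invariant measure of Theorem~\ref{thm_inv_meas} forward under $p=\sinh(s)\,\go(\theta)$ to get the spatial joint density proportional to $(1+|p|^2)^{-(\gg+1)/2}$ (equivalently, integrate out $p_0$ in~\eqref{eq_inv_meas_cart}) and then marginalize two components. Your change of variables, the convergence condition $\gg>1$ (subsumed by the standing hypothesis $\gg>2$), and the resulting exponent $-(\gg-1)/2$ all check out, and your care in treating the delta-function formula as a mnemonic rather than the argument itself is exactly right.
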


Figure~\ref{figIV} illustrates the result of corollary~\ref{cor_inv_meas_Pi} with
simulations of the value of $P_1$ for large times, $d=3$ and various values of~$\gg$.

\begin{figure} \centering
\subfloat[$d=3$, $\gg=4$]
	{\includegraphics[scale=.3]{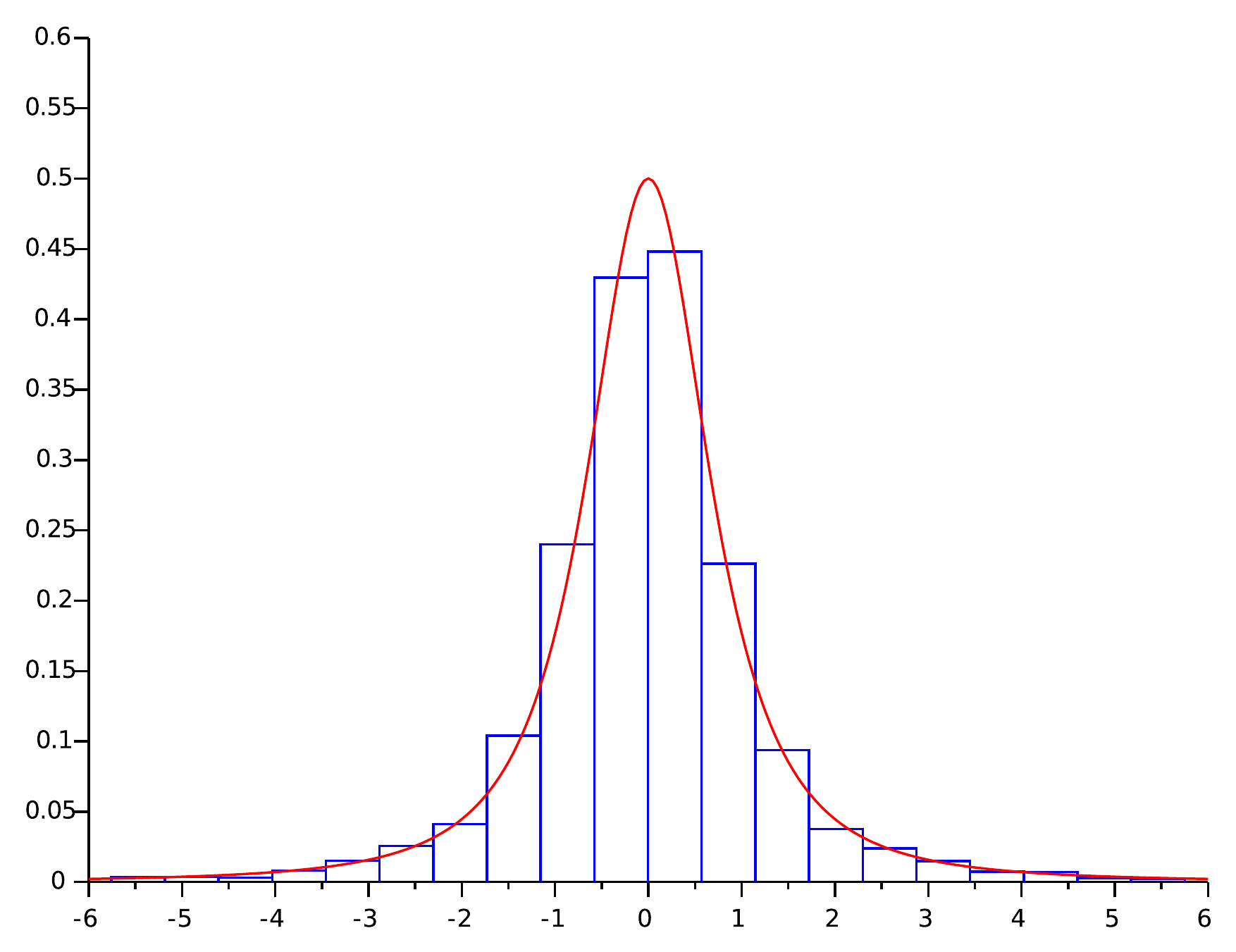}\label{figIVa}}
\hspace{4em}
\subfloat[$d=3$, $\gg=6$]
	{\includegraphics[scale=.3]{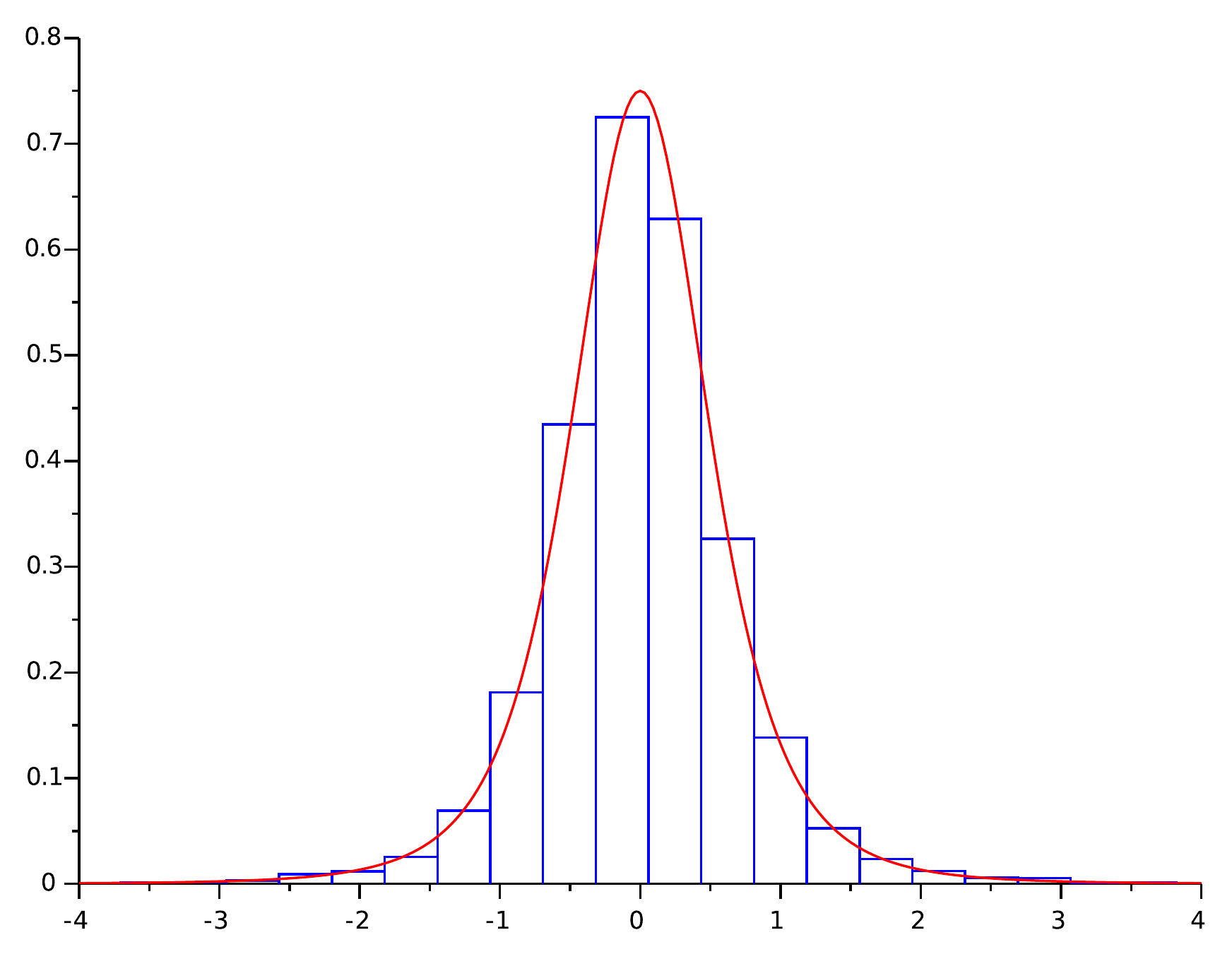}\label{figIVb}}
\newline
\subfloat[$d=3$, $\gg=8$]
	{\includegraphics[scale=.3]{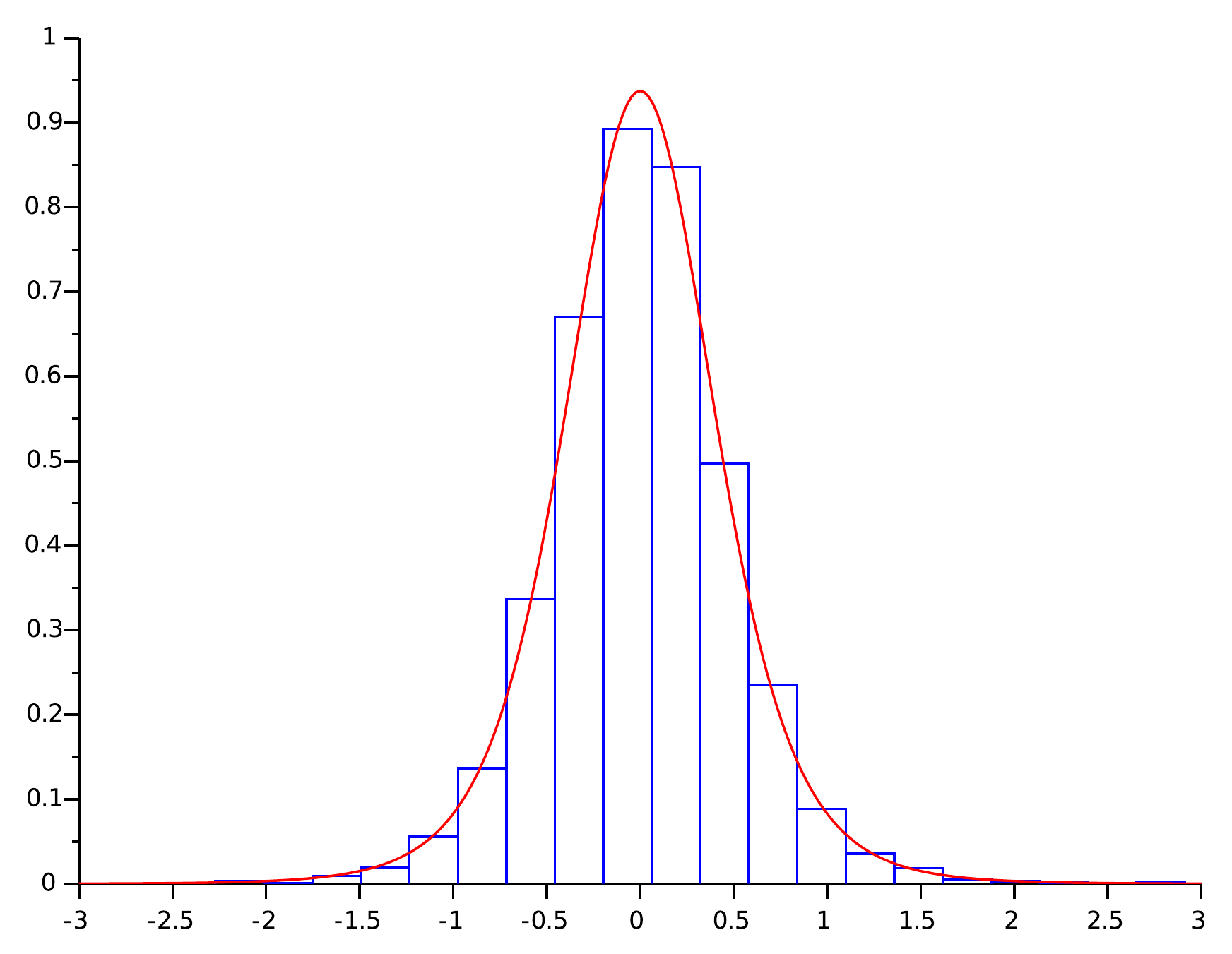}\label{figIVc}}
\hspace{4em}
\subfloat[$d=3$, $\gg=10$]
	{\includegraphics[scale=.3]{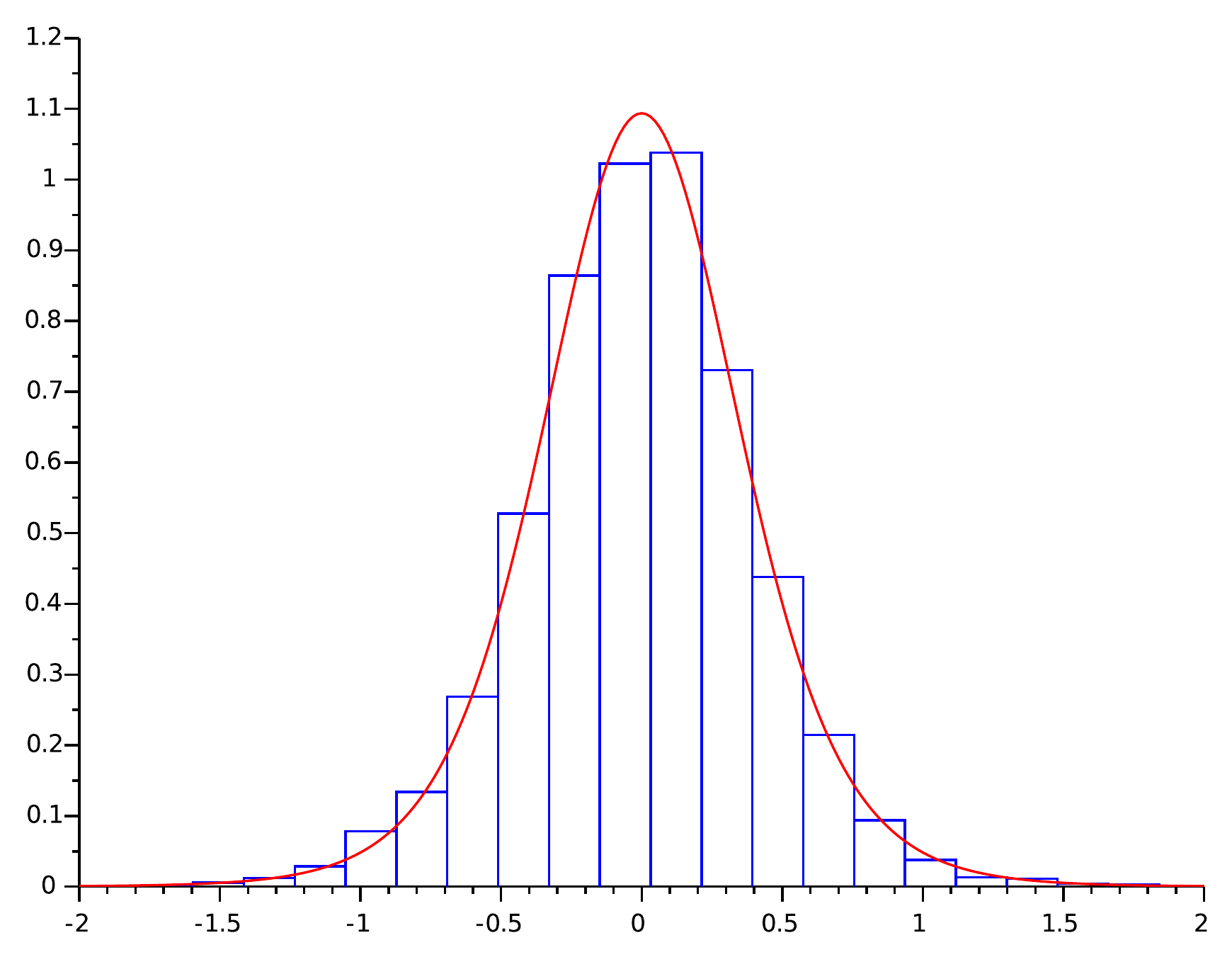}\label{figIVd}}
\caption{Histograms of Simulations of $P_1$  at Large Times}\label{figIV}
\end{figure}

Also for components $V_i$, $i=1$, $2$, $3$, of the Ornstein--Uhlenbeck velocity process 
$V$ (see~\eqref{ROUV}) it is straightforward to calculate their marginal invariant densities:

\begin{corollary} \label{cor_inv_meas_Vi}
For $d=3$ and $\gg>2$ the cartesian components $V_i$, $i=1$, $2$, $3$, of the Ornstein--Uhlenbeck 
velocity process have marginal invariant densities $\vp_{V_i}$ relative to Lebesgue measure 
given by a symmetric Beta law on $[-1,1]$ with parameter $\gg/2$. Explicitly:
\begin{equation}	\label{eq_v_dens}
	\vp_{V_i} (v) = \frac{\gG(\gg)}{2^{\gg-1} \gG(\gg/2)^2}\,(1-v^2)^{\gg/2 -1},
							\qquad i=1,2,3,\, v\in [-1,1].
\end{equation}
\end{corollary}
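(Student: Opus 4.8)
The plan is to extract the stationary law of the vector $V$ from the factorized invariant measure of Theorem~\ref{thm_inv_meas} and then marginalize. By \eqref{ROUV} we have $V=\tanh(S)\,\go(\Theta_{\tau})$, so $|V|=\tanh(S)$ and the direction is $\go(\Theta)$. The product form of the invariant measure in Theorem~\ref{thm_inv_meas} shows that in the stationary regime the radial variable $S$ and the direction $\go(\Theta)$ are independent, with $\go(\Theta)$ uniformly distributed on $S^{2}$, while by Corollary~\ref{cor_inv_meas_P0} (with $d=3$ and $m^{2}=1$) the speed has density
\[
	\vp_{|V|}(v)=\frac{1}{N_{3,\gg}}\,v^{2}(1-v^{2})^{(\gg-4)/2},\qquad v\in[0,1].
\]
Since $V$ is the product of an independent nonnegative radial factor and a uniform direction, its stationary law is rotationally invariant on the open unit ball of $\R^{3}$. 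First I would convert the radial density into the density of $V$ as a vector: dividing $\vp_{|V|}(r)$ by the area $4\pi r^{2}$ of the sphere of radius $r$ gives
\[
	f_{V}(v)=\frac{1}{4\pi N_{3,\gg}}\,(1-|v|^{2})^{(\gg-4)/2},\qquad |v|<1.
\]

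Next I would obtain the marginal of $V_{1}$ by integrating out $(v_{2},v_{3})$ over the disk $v_{2}^{2}+v_{3}^{2}\le 1-v_{1}^{2}$. Passing to polar coordinates in that plane and substituting $w=v_{2}^{2}+v_{3}^{2}$ reduces the double integral to the elementary power integral $\int_{0}^{1-v_{1}^{2}}(1-v_{1}^{2}-w)^{(\gg-4)/2}\,dw$, which converges precisely because $\gg>2$. The result is
\[
	\vp_{V_{1}}(v)=\mathrm{const.}\,(1-v^{2})^{(\gg-2)/2}=\mathrm{const.}\,(1-v^{2})^{\gg/2-1},\qquad v\in[-1,1],
\]
and by the permutation symmetry of the uniform law on $S^{2}$ the components $V_{2}$ and $V_{3}$ share this marginal. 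Note that although $f_{V}$ itself blows up at the sphere $|v|=1$ when $2<\gg<4$, the surviving exponent $\gg/2-1$ is strictly positive, so each marginal is bounded and vanishes at $v=\pm1$; this is why $\gg>2$ already suffices for Corollary~\ref{cor_inv_meas_Vi}, in contrast to the stronger requirement of Remark~\ref{rem_V} for the speed itself.

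Finally I would fix the normalization by recognizing $(1-v^{2})^{\gg/2-1}$ as, up to a constant, the density of the symmetric Beta law on $[-1,1]$ with parameter $\gg/2$. The affine change of variables $x=(1+v)/2$ turns $(1-v^{2})/4$ into $x(1-x)$ and identifies the law with a $\mathrm{Beta}(\gg/2,\gg/2)$ distribution rescaled to $[-1,1]$; the normalizing constant then equals $1/\bigl(2^{\gg-1}B(\gg/2,\gg/2)\bigr)$, and inserting $B(\gg/2,\gg/2)=\gG(\gg/2)^{2}/\gG(\gg)$ produces exactly the prefactor $\gG(\gg)/\bigl(2^{\gg-1}\gG(\gg/2)^{2}\bigr)$ in~\eqref{eq_v_dens}. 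The only genuine computation is the disk integral in the middle step; everything else is the product structure of Theorem~\ref{thm_inv_meas} together with standard Beta--Gamma bookkeeping, so I do not expect a serious obstacle — the one point requiring care is checking the convergence of that integral, which pins down the admissible range $\gg>2$.
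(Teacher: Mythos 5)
Your proof is correct, and it is exactly the ``straightforward'' calculation the paper invokes without writing out: the product structure of the invariant measure in Theorem~\ref{thm_inv_meas} gives independence of $|V|=\tanh(S)$ and the uniform direction $\go(\Theta)$, the disk integral yields the exponent $\gg/2-1$ (with convergence pinning down $\gg>2$, matching the normalizability condition $\gg>d-1=2$ of the invariant measure itself), and the Beta--Gamma bookkeeping reproduces the constant in~\eqref{eq_v_dens}. Your side remark that the marginal stays bounded even when the joint density $f_V$ blows up at $|v|=1$ for $2<\gg<4$ is a correct and worthwhile observation consistent with Remark~\ref{rem_V}.
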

Figure~\ref{figV} shows histograms of $V_2$ resulting from the simulation of the momentum
process, where the red line is the graph of the Beta density~\eqref{eq_v_dens}.
\begin{figure} \centering
\subfloat[$d=3$, $\gg=4$]
	{\includegraphics[scale=.3]{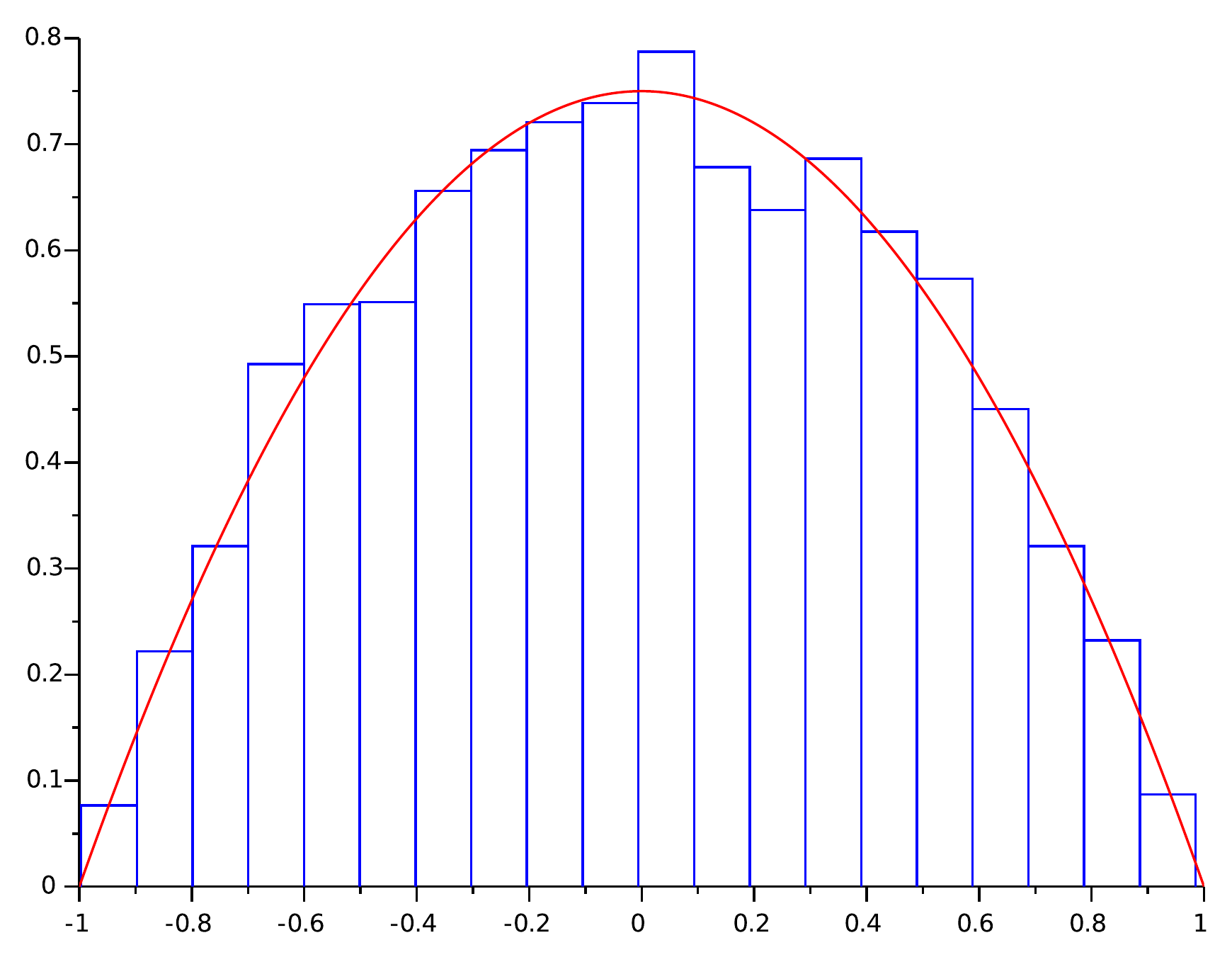}\label{figVa}}
\hspace{4em}
\subfloat[$d=3$, $\gg=6$]
	{\includegraphics[scale=.3]{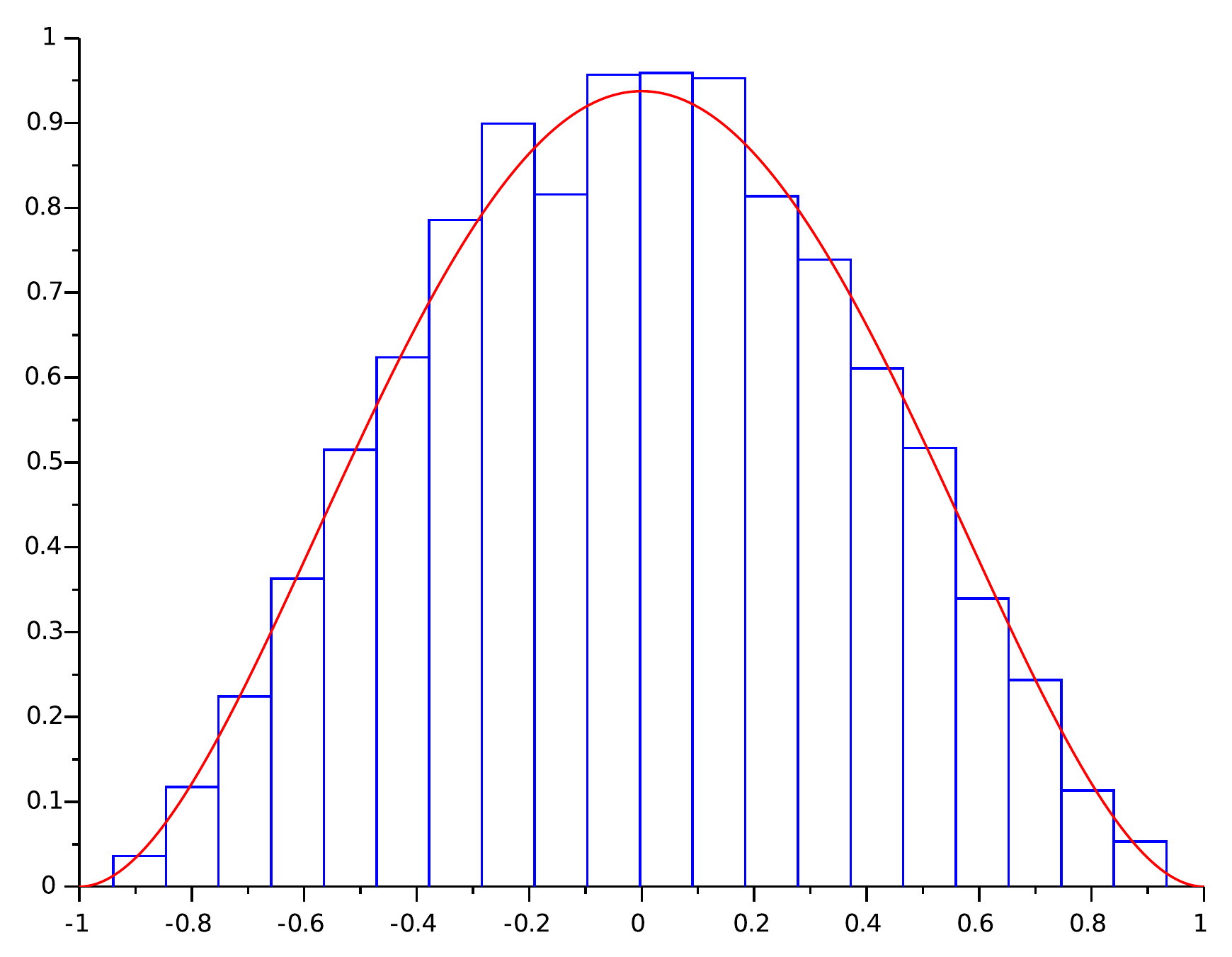}\label{figVb}}
\newline
\subfloat[$d=3$, $\gg=8$]
	{\includegraphics[scale=.3]{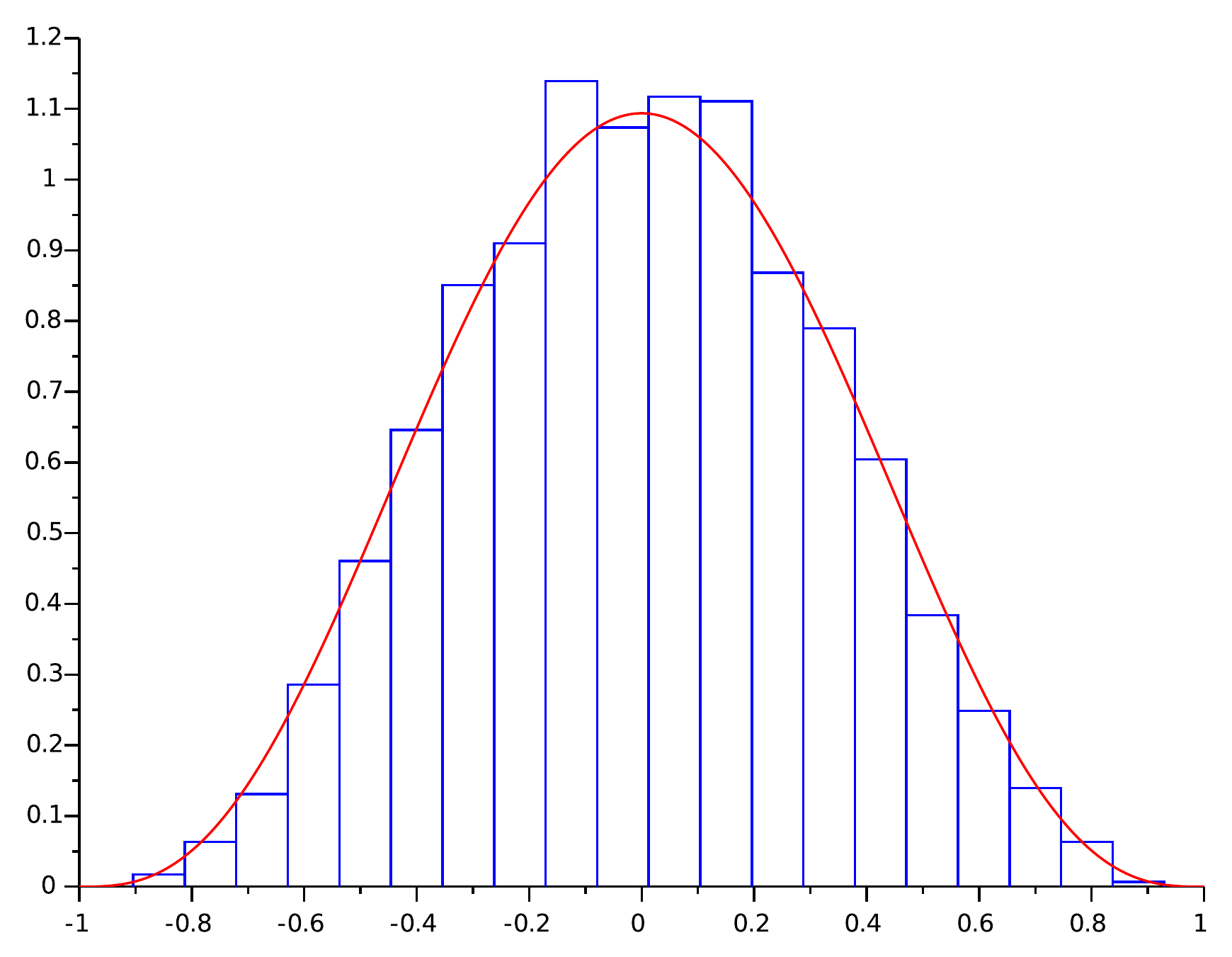}\label{figVc}}
\hspace{4em}
\subfloat[$d=3$, $\gg=10$]
	{\includegraphics[scale=.3]{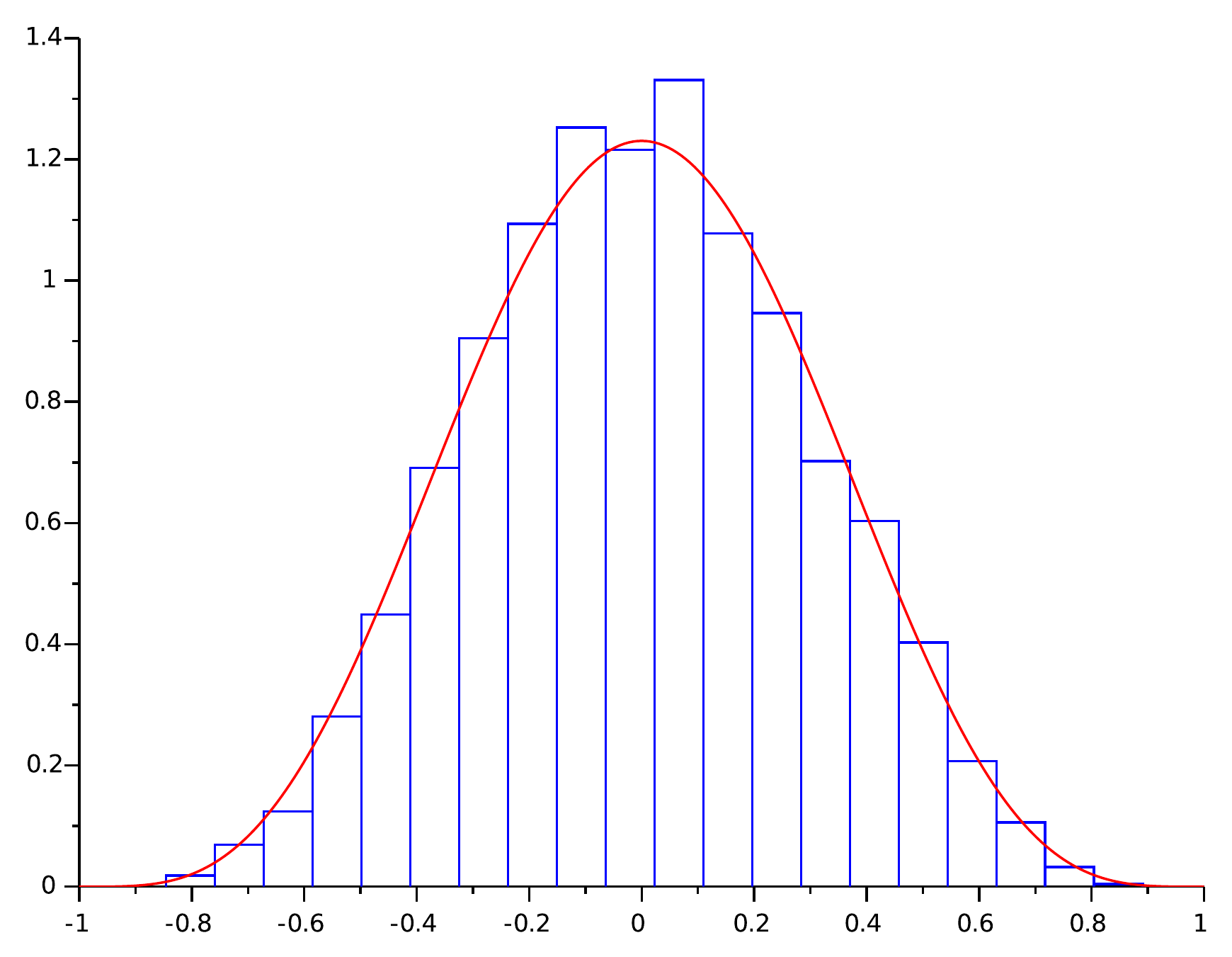}\label{figVd}}
\caption{Histograms of Simulations of $V_2$  at Large Times}\label{figV}
\end{figure}

Remark that for small velocities $v$, the density
in~\eqref{eq_v_dens} is (up to normalization) close to a centered normal
distribution:
\begin{equation*}
	(1-v^2)^{\gg/2 -1} \approx e^{-(\gg/2-1) v^2}
\end{equation*}
so that at least for the cartesian components of the velocity process we obtain
a certain compatibility with the invariant density of the classical, non-relativistic 
Ornstein--Uhlenbeck  process.

\begin{appendix}

\section{Simulations}	\label{app_Sim}
Consider the stochastic differential equations~\eqref{SDE_S} and~\eqref{SDE_tanh}.
As argued in sections~\ref{sect_BM} and~\ref{sect_OU}, the solutions do not leave
the interval $(0,+\infty)$ when started there. However, if one tries to simulate paths 
with a naive scheme, such as the Euler--Maruyama scheme, it is not possible to prevent 
all paths from crossing the singularity of the drift at $s=0$ into the region $(-\infty,0)$.
The reason is of course, that one actually simulates a random walk, and the discrete
increments do have the possibility to cross the singularity at $s=0$. A simulation 
scheme, called \emph{backward Euler--Maruyama} (BEM) scheme, which does prevent this 
crossing has been provided by Neuenkirch and Szpruch in~\cite{NeSz14}. The conditions 
formulated in~\cite{NeSz14} for their results to hold are fulfilled by the SDE's
considered in the present article. The scheme is of the form
\begin{equation}	\label{eq_BEM}
	s_{t+\gD t} = s_t + b(s_{t+\gD t})\,\gD t + \gD W_{t+\gD t},
\end{equation}
where $b$ is the drift, and the increments $\gD W_{t+\gD t}$ of the Wiener process 
are --- as usual --- independent centered normal variates with variance $\gD t$.
Observe that in order to compute an increment of $s$ from one time step to the
next, one has to numerically solve an \emph{implicit} problem. We implemented
this scheme in \textsf{Scilab},
\footnote{\texttt{http://www.scilab.org}}
and had to observe that sometimes we still obtained paths which crossed the singularity 
of the drift at the origin. A careful analysis showed that this is due to the fact that 
\textsf{Scilab}'s \texttt{fsolve} routine does not in all cases find the correct solution 
of the implicit problem. We believe that this is so because probably \textsf{Scilab}'s 
\texttt{fsolve}  is based on Newton's method, which is well-known to fail under certain 
circumstances. In order to get a functioning scheme for the SDE's~\eqref{SDE_S}, 
\eqref{SDE_tanh}, we therefore supplemented  \textsf{Scilab}'s \texttt{fsolve} with a 
bisection method for those cases, where a jump across the singularity had occured. 
Actually, a similar consideration had to be done for the SDE of $\Theta$ 
in~\eqref{SDE_d_3}, in which case the drift has singularities at $\theta=0$ 
and~$\theta=\pi$.

For each of the histograms in figures~\ref{figI} and~\ref{figII}, we generated with 
the above described method samples of $5\times 10^3$ paths, with $m^2=1$, $\gD t = 2^{-6}$, 
and let the paths develop until time $\tau=100$, i.e., altogether over $2^6\times 100$ 
time steps.

\end{appendix}

\vspace*{1\baselineskip}
\noindent
\textbf{Afterword by JP.} During the work on this manuscript my mentor and
coauthor \emph{Robert Schrader} passed away. Robert was a truly outstanding 
scientist, a charismatic teacher, and a wonderful colleague and friend --- 
I miss him very much. 

\providecommand{\bysame}{\leavevmode\hbox to3em{\hrulefill}\thinspace}
\providecommand{\MR}{\relax\ifhmode\unskip\space\fi MR }
\providecommand{\MRhref}[2]{%
  \href{http://www.ams.org/mathscinet-getitem?mr=#1}{#2}
}
\providecommand{\href}[2]{#2}

\end{document}